\newtheorem{theorem}{THEOREM}
\newtheorem{Lemma}{Lemma}
\newtheorem{Proposition}{PROPOSITION}
\newtheorem{corollary}{COROLLARY}
\theoremstyle{definition}
\newtheorem{definition}{DEFINITION}
\theoremstyle{definition}
\newtheorem{Remark}{Remark}
\newcommand{\R}{\mathbb{R}}
\renewcommand{\S}{\mathbb{S}}
\newcommand{\C}{\mathbb{C}}
\newcommand{\F}{\mathcal{F}}
\newcommand{\W}{\mathcal{W}}
\newcommand{\gm}{\gamma}
\newcommand{\al}{\alpha}
\newcommand{\bra}{\langle}
\newcommand{\ket}{\rangle}
\newcommand{\be}{\begin{equation}}
\newcommand{\ee}{\end{equation}}
\newcommand{\bea}{\begin{align}}
\newcommand{\eea}{\end{align}}
\newcommand\de{\mathcal D}
\newcommand\infspec{{\rm{inf\, spec\,}}}
\newcommand\eps\epsilon
\newcommand\V{\mathcal{V}}
\newcommand\B{\mathcal{B}}
\newcommand\om{\overline{m}}
\DeclareMathOperator{\Tr}{Tr}
\numberwithin{equation}{section}
\begin{document}

\title{The BCS gap equation for spin-polarized fermions}

\author[A. Freiji]{Abraham Freiji} \address{{\rm (Abraham Freiji)},
  Department of Neurology, UAB, SC 350, 1530 3rd Ave S, Birmingham, AL
  35294, USA} \email{afreiji@uab.edu}

\author[C. Hainzl]{Christian Hainzl} \address{{\rm (Christian Hainzl)}
  Mathematisches Institut, Universit\"at T\"ubingen, Auf der
  Morgenstelle 10, 72076 T\"ubingen, Germany}
\email{christian.hainzl@uni-tuebingen.de}

 \author[R. Seiringer]{Robert Seiringer} \address{{\rm (Robert Seiringer)}, Department of
   Mathematics and Statistics, McGill University, 805 Sherbrooke
   Street West, Montreal, QC, H3A 2K6, Canada}
 \email{rseiring@math.mcgill.ca}

\thanks{\copyright\ 2011 by the authors. This paper may be reproduced, in its entirety, for non-commercial purposes. \\\indent Partial support by U.S. National Science
Foundation grants DMS-0800906 (C.H.) and PHY-0845292 (R.S.) and the NSERC (R.S.) is
gratefully acknowledged.}

\begin{abstract}
  We study the BCS gap equation for a Fermi gas with unequal
  population of spin-up and spin-down states. For $\cosh(\delta_\mu/T)
  \leq 2$, with $T$ the temperature and $\delta_\mu$ the chemical
  potential difference, the question of existence of non-trivial
  solutions can be reduced to spectral properties of a linear
  operator, similar to the unpolarized case studied previously in
  \cite{FHNS,HHSS,HS}.  For $\cosh(\delta_\mu/T) > 2$ the phase
  diagram is more complicated, however. We derive upper and lower
  bounds for the critical temperature, and study their behavior in the
  small coupling limit.
\end{abstract}

\maketitle

\section{Introduction}

Spin-polarized fermionic systems have attracted a lot of interest in
recent years, and have been intensely studied both experimentally and
theoretically. They are of relevance in various areas of physics,
ranging from cold gases to nuclear stars.

The goal of our present paper is to study the BCS gap equation for
such imbalanced systems. Experiments \cite{ZSSW,ZSSK} have
demonstrated a significantly richer phase diagram than in the
balanced, unpolarized case. The possibility of pairing in polarized 
systems was first pointed out in \cite{sarma}, and since then has been
intensively studied in the literature, mostly under the assumption of
a  contact interaction  among the fermions. We
refer to \cite{Chen} and \cite{zwerger} for recent reviews on this
subject.

Our work is a continuation of recent papers \cite{FHNS,HHSS,HS,HS3}
where the BCS gap equation in the balanced case for systems with
general pair interaction potential $V$ was investigated. In
particular, a criterion for potentials giving rise to superfluidity
was derived. The form of the interaction in actual physical systems
can be quite general, and hence it is important to keep $V$ as general
as possible.  In the following we recall these results before we
present the main results in the imbalanced case.

\subsection{Balanced fermionic systems}\label{sec:bal}

Consider a gas consisting of neutral spin $\frac 12$ fermions. The
kinetic energy of these particles is described by the non-relativistic
Schr\"odinger operator, and their interaction by a pair potential which we write for convenience as 
$2\lambda V$, with $\lambda$ being a coupling parameter.  According to
Bardeen, Cooper and Schrieffer (BCS) \cite{BCS}  a superfluid state of the system 
is characterized by the existence of a non-trivial solution of the
{\em gap equation}
\begin{equation}\label{bcseintro}
\Delta(p) = -\frac \lambda{(2\pi)^{3/2}} \int_{\R^3} \hat V(p-q)
\frac{\Delta(q)}{E(q)} \tanh \frac{E(q)}{2T} \, dq
\end{equation}
at some temperature $T\geq 0$, with $E(p)= \sqrt{(p^2-\mu)^2 +
  |\Delta(p)|^2}$. Here, $\mu \in \R$ is the chemical potential and $\hat
V(p)=(2\pi)^{-3/2}\int_{\R^3} V(x) e^{-ipx} dx $ denotes the Fourier
transform of $V$. The function $\Delta(p)$ is the order parameter which is closely related
to the wavefunction of the {\em Cooper pairs}.

The detailed investigation of Eq.~\eqref{bcseintro}
 in \cite{FHNS,HHSS,HS,HS3,HS2,HS4,HS5} is based on the
observation in \cite{HHSS} that the existence of a non trivial
solution of \eqref{bcseintro} is equivalent to the existence of a
negative eigenvalue of the {\em linear } operator $ K_T(-i\nabla) + \lambda V(x)$
on $L^2(\R^3)$, where 
$$ 
K_T(p) = \frac {|p^2 - \mu|}{\tanh \frac{|p^2 -
      \mu|}{2T}}\,. $$ 
The critical temperature $T_c(\lambda V)$  is thus defined by the equation 
\begin{equation}\label{def:tc}
\infspec \left( K_{T_c} + \lambda V \right) =0\,.
\end{equation}
Since $K_T(p)$ is pointwise monotone in $T$, this defines $T_c$
uniquely. For $T < T_c(\lambda V)$ the equation \eqref{bcseintro} has
a non-trivial solution, while for $T \geq T_c(\lambda V)$ it has
none.

For $\mu>0$, $K_T(p)$ attains its minimum on the sphere $\Omega_\mu = \{
p\, : \, p^2 =\mu\}$. In the small coupling limit $\lambda\to 0$, the
study of $T_c$ reduces to the study of an effective operator $\V_\mu :
L^2(\Omega_\mu) \to L^2(\Omega_\mu)$, given by
\begin{equation}\label{VV}
  \big(\V_{\mu} u\big)(p) =
  \frac 1{(2\pi)^{3/2}} \frac 1{\sqrt{\mu}}\int_{\Omega_{\mu}}\hat V(p-q) u(q)
  \,d\omega(q)\,,
\end{equation}
with $d\omega$ denoting the Lebesgue measure on the sphere.  The
lowest eigenvalue $e_\mu<0$ of $\V_{\mu}$ determines the leading order of
$T_c(\lambda V)$ as $\lambda \to 0$ \cite{FHNS}. In fact, 
in \cite{FHNS,HS} it is shown that 
\begin{equation}\label{form:tc}
  T_c = \mu  \left(\frac{8 e^{\gamma-2}}{\pi} + o(1)\right)  e^{\pi/(2 \sqrt{\mu} b_\mu)} \quad \text{as $\lambda \to 0$}\,,
\end{equation}
where $\gamma\approx 0.577$ denotes Euler's constant, and $b_\mu = \pi \lambda
e_\mu/(2\sqrt\mu ) + O(\lambda^2)$, where the $O(\lambda^2)$ term is determined
by another effective operator of $L^2(\Omega_\mu)$.  The parameter
$b_\mu$ can be interpreted as an effective scattering length (in
second order Born approximation), which reduces to the usual
scattering length of the interaction potential $2\lambda V$ in the low density
limit, i.e., as  $\mu\to 0$ \cite{HS3}.

\section{Main results}

We consider a gas of spin $\frac12$ particles at temperature $T\geq 0$, interacting via a
pair interaction potential $2\lambda V$. We assume the two spin states
are unequally populated, but have equal masses. We introduce two
separate chemical potentials, $\mu_+$ and $\mu_-$, for the spin up and
spin down particles, respectively. We denote the average chemical
potential by $\bar\mu=(\mu_+ + \mu_-)/2$, and half their
difference by
\begin{equation}
\delta_\mu = \frac{\mu_+ - \mu_-}2\,,
\end{equation}
which we assume to be non-negative without loss of generality. 

Let $\gamma_+(p)$ and $\gamma_-(p)$ denote the momentum distributions
for the spin-up and spin-down fermions respectively.  In addition, let
$\alpha$ denote the Cooper pair wave-function.  The BCS functional
$\F_T$ is defined as
\begin{align}\nonumber
\F_T(\gm_{+},\gm_{-},\al) & = \frac{1}{2} \int_{\R^3}
(p^2-\mu_{+})\gm_{+}(p)dp+ \frac{1}{2} \int_{\R^3}
(p^2-\mu_{-})\gm_{-}(p)dp   \\  & \quad  +\lambda \int_{\R^3}
|\alpha(x)|^2 V(x)dx -\frac T2 S(\gamma_+,\gamma_-,\alpha)\,,\label{freeenergy0}
\end{align}
where the entropy $S$ is 
\begin{equation}\label{def:entropy}
S(\gamma_+,\gamma_-,\alpha) = - \int_{\R^3} \Tr_{\C^4}\left[\Gamma(p)
  \ln \Gamma(p)\right]dp, \,\, 
\end{equation}
with 
\begin{equation}\label{def:Gamma}
\Gamma(p)= \left(\begin{matrix}\gamma_+(p) & 0 & 0 & \hat \alpha(p)\\
0 & \gm_-(p) & -\hat \alpha(p) & 0 \\
0 & -\overline{\hat \alpha(p)} & 1-\gm_+(p) & 0\\
\overline{\hat \alpha(p)} & 0 & 0 & 1-\gm_-(p)
\end{matrix}\right)\,.
\end{equation}
The functions $\gamma_\pm$ and $\alpha$ are constrained by demanding that $0 \leq \Gamma(p) \leq 1$ as an operator on $\C^4$, for all $p\in\R^3$.

The functional $\F_T$ can be heuristically derived \cite{Leg,HHSS} by
starting from a genuine many-body Hamiltonian and making three steps
of simplifications. First, one considers only quasi-free (or
generalized Hartree-Fock \cite{BLS}) states. Second, one ignores the
exchange and direct terms in the interaction energy. Finally, one
assumes translation invariance and works out the energy per unit
volume.  The specific ansatz of the states in the form of $\Gamma$
follows from assuming, in addition, that the total spin points in the
direction of the applied field (quantified by $\delta_\mu$).

We will show below that on an appropriate domain 
the functional $\F_T$ attains its minimum and the corresponding
Euler-Lagrange equation, expressed in terms of the {\em order
parameter} $\Delta  = - 2 \lambda \hat V \ast \hat \alpha$, 
takes the form
\begin{equation}\label{overgapeqn}
 \Delta = -\lambda \hat V \ast \frac{\Delta}{2E}  \left(\tanh
\left(\frac{E +\delta_\mu}{2T} \right)+\tanh
\left(\frac{E-\delta_\mu}{2T} \right)\right)\,,
 \end{equation}
 where $E(p) = \sqrt{(p^2 - \bar\mu)^2 + |\Delta(p)|^2}$. For
 convenience we define the convolution with a factor $(2\pi)^{-3/2}$
 in front, i.e., $f\ast g(p) = (2\pi)^{-3/2} \int_{\R^3} f(p-q)g(q)
 dq.$

In terms of
\begin{equation}\label{def:kD}
 K^{\Delta(p)}_{T, \delta_\mu}(p)=\frac{2E(p)}{\tanh \left(\frac{E(p)+\delta_\mu}{2T} \right)+
\tanh \left(\frac{E(p)-\delta_\mu}{2T} \right)}\,,
\end{equation}
 the gap equation can
conveniently be expressed as $\Delta = - \lambda \hat V \ast
(\Delta/K^{\Delta}_{T, \delta_\mu})$. Define also
$$
\widetilde K_{T, \delta_\mu}(p) = \inf_{x>0} K^{x}_{T, \delta_\mu}(p)\,.
$$
The basis of our analysis is the following theorem.

\begin{theorem}\label{over1}
Let $V\in L^{3/2}(\R^3) $ be real-valued, $\bar\mu \in\R$, $T\geq 0$ and $\delta_\mu \geq 0$.
\begin{itemize}
\item[(i)] If the linear operator $K^{0}_{T, \delta_\mu}(-i\nabla)
  +\lambda V(x)$ has a negative eigenvalue, then the functional $\F_T$
  attains its minimum at some $\alpha \neq 0$.
\item[(ii)] If $\F_T$ attains its minimum at some $\alpha \neq 0$, then
$\Delta(p) = - 2\lambda \hat V \ast \hat \alpha (p)$ satisfied the
BCS gap equation \eqref{overgapeqn}.
\item[(iii)]
If the BCS equation has a non-trivial solution, then
the linear operator $\widetilde K_{T, \delta_\mu}(-i\nabla) +\lambda V(x)$ has a
negative eigenvalue. 
\end{itemize}
\end{theorem}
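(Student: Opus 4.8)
The plan is to prove the three statements more or less independently, mimicking the strategy of the unpolarized case in \cite{HHSS,HS}, with the extra $\delta_\mu$-dependence tracked carefully. For part (i), I would first establish that $\F_T$ is bounded below and weakly lower semicontinuous on the convex domain $\mathcal D = \{(\gamma_+,\gamma_-,\alpha) : 0\le \Gamma(p)\le 1,\ (1+p^2)(\gamma_\pm)\in L^1,\ \alpha\in H^1\}$, so that a minimizer exists; the constraint $V\in L^{3/2}$ ensures the interaction term is controlled by the kinetic term via Sobolev/Hölder, exactly as in the balanced case. The nontriviality of the minimizer ($\alpha\neq 0$) comes from a second-order expansion of $\F_T$ around the normal state $(\gamma_+^0,\gamma_-^0,0)$, where $\gamma_\pm^0(p)=(1+e^{(p^2-\mu_\pm)/T})^{-1}$. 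Expanding to second order in $\alpha$ and optimizing over $\gamma_\pm$, the quadratic form in $\alpha$ turns out to be $\langle \alpha, (K^0_{T,\delta_\mu}(-i\nabla)+\lambda V)\alpha\rangle$ — this identity is where the specific combination of $\tanh$'s in \eqref{def:kD} at $\Delta=0$ appears, via the relative entropy of $\Gamma$ with respect to $\Gamma^0$. If this operator has a negative eigenvalue with eigenfunction $\psi$, then $\F_T$ evaluated along the path $(\gamma_+^0,\gamma_-^0,0)+t(\,\cdot\,,\,\cdot\,,\psi)$ drops below $\F_T(\gamma_+^0,\gamma_-^0,0)$ for small $t$, so the minimizer cannot have $\alpha=0$.

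For part (ii), I would compute the Euler–Lagrange equations of $\F_T$ at a minimizer with $\alpha\neq 0$. Varying $\gamma_\pm$ (subject to $0\le\Gamma\le1$) and $\alpha$, one gets that $\Gamma$ must commute with the matrix $H_\Gamma$ obtained by differentiating the energy, i.e. $\Gamma = (1+e^{H_\Gamma/T})^{-1}$ in block form, with the off-diagonal block of $H_\Gamma$ equal to $\widehat\Delta$ where $\Delta = -2\lambda \hat V\ast\hat\alpha = 2\lambda V\alpha$ (in $x$-space). Diagonalizing the $4\times4$ matrix — it decouples into two $2\times2$ blocks shifted by $\pm\delta_\mu$ — gives eigenvalues $\pm E(p)+\delta_\mu$ and $\pm E(p)-\delta_\mu$ with $E=\sqrt{(p^2-\bar\mu)^2+|\Delta|^2}$, and reading off the off-diagonal entry of $\Gamma$ yields $\hat\alpha(p) = -\frac{\Delta(p)}{2E(p)}\big(\tanh\frac{E+\delta_\mu}{2T}+\tanh\frac{E-\delta_\mu}{2T}\big)$, which is precisely \eqref{overgapeqn} after convolving with $-2\lambda\hat V$. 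This is essentially a (lengthy but mechanical) linear-algebra computation; the only subtlety is justifying differentiability of the entropy term and handling the boundary of the constraint set, which one argues a minimizer with $\alpha\neq0$ cannot touch.

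For part (iii), suppose $\Delta\neq 0$ solves \eqref{overgapeqn}, i.e. $\Delta = -\lambda\hat V\ast(\Delta/K^{\Delta}_{T,\delta_\mu})$. Set $\psi = \Delta/K^{\Delta}_{T,\delta_\mu}$, which lies in $L^2$ because $K^{\Delta}_{T,\delta_\mu}(p)\gtrsim |p|^2$ for large $|p|$ and $\Delta$ is bounded (boundedness of $\Delta$ follows from the equation and $V\in L^{3/2}$, $\hat V$ bounded — or one works with $\alpha\in H^1$ directly). Then $\Delta = K^{\Delta}_{T,\delta_\mu}\psi$ as well, so
\begin{equation*}
\langle \psi, K^{\Delta}_{T,\delta_\mu}(-i\nabla)\psi\rangle + \lambda\langle\psi, V\psi\rangle = \langle \psi,\Delta\rangle + \langle \psi, \lambda V\psi\rangle = \langle\psi, \Delta + \lambda\hat V\ast\psi\rangle = 0,
\end{equation*}
using $\lambda V\psi = \lambda \hat V\ast\hat\psi$ in Fourier variables and the gap equation. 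Since pointwise $K^{\Delta}_{T,\delta_\mu}(p)\ge \widetilde K_{T,\delta_\mu}(p)$ by definition of the infimum, we get $\langle\psi,(\widetilde K_{T,\delta_\mu}(-i\nabla)+\lambda V)\psi\rangle\le 0$, with strict inequality unless $\Delta$ is constant in $|p|$ (which is impossible for $\Delta\not\equiv0$ in $L^2$-type spaces). Hence $\widetilde K_{T,\delta_\mu}(-i\nabla)+\lambda V$ has negative infimum of spectrum, and since $K^0-\widetilde K$ and the negative part of $\lambda V$ are relatively compact perturbations of $\widetilde K_{T,\delta_\mu}(-i\nabla)$ (whose essential spectrum starts at $\inf_p \widetilde K_{T,\delta_\mu}(p)>0$), this negative infimum is attained at an eigenvalue.

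The main obstacle I anticipate is part (i): verifying lower semicontinuity and coercivity of $\F_T$ on the correct domain, and in particular making the second-order expansion around the normal state rigorous — controlling the remainder uniformly and identifying the quadratic form with $K^0_{T,\delta_\mu}(-i\nabla)+\lambda V$ requires some care with the operator-convexity of $x\mapsto x\ln x$ and with the fact that the minimization over $\gamma_\pm$ at fixed $\alpha$ must be carried out before extracting the leading quadratic term. Parts (ii) and (iii) are comparatively routine once the framework of \cite{HHSS} is in place.
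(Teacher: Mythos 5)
Your proposal follows essentially the same route as the paper's proof: part (i) via the second variation of $\F_T$ at the normal state, which yields exactly the quadratic form of $K^0_{T,\delta_\mu}+\lambda V$ (the paper simply holds $\gamma_\pm$ fixed at $\gamma_\pm^0$, since optimizing them only affects higher order); part (ii) via the Euler--Lagrange equations (the paper diagonalizes $\Gamma$ through the scalar functions $r,s,w$ rather than a $4\times 4$ block matrix, but the computation is the same); and part (iii) by testing $\widetilde K_{T,\delta_\mu}+\lambda V$ on $\hat\alpha=\Delta/K^{\Delta}_{T,\delta_\mu}$ and using the pointwise bound $\widetilde K_{T,\delta_\mu}\le K^{\Delta}_{T,\delta_\mu}$. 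The only substantive difference is your strictness claim in (iii): the equality case is not ``$\Delta$ constant in $|p|$'' but rather $\widetilde K_{T,\delta_\mu}=K^{\Delta}_{T,\delta_\mu}$ a.e.\ on $\supp\Delta$, which the paper rules out (equally briefly) by noting that $\hat\alpha$ and $\Delta$ have the same support.
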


The proof of Theorem~\ref{over1} is given in Section~\ref{over1sec}.

\begin{Remark}\label{rem1}
  We shall show in Corollary \ref{cor1} that for all pairs
  $(\delta_\mu,T)$ with $\cosh(\delta_\mu / T) \leq 2$, the function
  $K^{\Delta}_{T, \delta_\mu}$ is pointwise monotone in $\Delta$ and hence
$$\widetilde K _{T, \delta_\mu}(p) = K^{0}  _{T, \delta_\mu}(p) \quad \forall p.$$  This
means that statements (i)$-$(iii) in Theorem~\ref{over1} are
actually {\em equivalent}, exactly as in the balanced case
\cite[Thm.~1]{HHSS}. Hence we have obtained a linear
characterization for the existence of a non-trivial solution of the
non-linear BCS gap equation. This does not hold if $ \cosh(\delta_\mu
/ T) > 2$, however, as discussed below.
\end{Remark}

\subsection{Critical temperatures}\label{ss:tc}

Since we do not have equivalence of the statements (i)$-$(iii) in Theorem~\ref{over1} in
case $\cosh(\delta_\mu / T) >2$, we do not have a simple definition of the
critical temperature as in (\ref{def:tc}) in this case.  With the aid
of the operators $K^0_{T,\delta_\mu}$ and $\widetilde
K_{T,\delta\mu}$, we can define curves $T^{\rm i}_{\delta_\mu}$ and
$T^{\rm o}_{\delta_\mu}$ in the $(\delta_\mu,T)$ plane, given by
\begin{align}\label{def:ti}
T^{\rm i}_{\delta_\mu}(\lambda V) : \,\, & \{(\delta_\mu,T)\, | \, {\rm
inf \, spec}
(K^0_{T,\delta_\mu} + \lambda V ) = 0\},\\ \label{def:to}
T^{\rm o}_{\delta_\mu}(\lambda V): \,\, & \{(\delta_\mu,T)\, | \, {\rm inf
\, spec} (\widetilde K_{T,\delta_\mu} + \lambda V ) = 0\}.
\end{align}
Below the curve $T^{\rm i}_{\delta_\mu}$ one has $ \infspec
(K^0_{T,\delta_\mu} + \lambda V ) < 0$, and hence Theorem~\ref{over1}(i) implies that the functional $\F_T$ has a global minimum at some
$\alpha \neq 0$.  Above the curve $T^{\rm o}_{\delta_\mu}$, however, Theorem
\ref{over1}(iii) implies that the gap equation does not have a
non-trivial solution. Consequently the minimum of $\F_T$ is attained
for $\alpha = 0$, the {\em normal} state. See Fig.~\ref{fig1}.

In the region between $T^{\rm i}_{\delta_\mu}$ and $T^{\rm o}_{\delta_\mu}$, there
can exist non-trivial solutions to the gap equation even if $\alpha$
vanishes identically for the minimizer of $\F_T$.  In fact it was
shown in \cite{CalMo} by numerical calculations, using a contact
interaction potential, that there is a parameter regime where the gap
equation has a solution but the corresponding energy is higher than
that of the normal state.

Using the methods developed in \cite{FHNS, HS} we are able to evaluate
$T^{\rm i}_{\delta_\mu}(\lambda V)$ and $T^{\rm
  o}_{\delta_\mu}(\lambda V)$ in the small coupling limit $\lambda \to
0$, and obtain analogous formulas as in the balanced case in
(\ref{form:tc}). With $T_c$ denoting the critical temperature in the balanced
case (at $\mu=\bar\mu$), we shall show that
\begin{equation}\label{claim:tc}
T_{\delta_\mu}^\#  = T_c  e^{-\kappa^\#(\delta_{\mu}/T_{\delta_\mu}^\#)}
\end{equation}
as $\lambda\to 0$, with $\#$ standing for either i or o, and the
$\kappa^\#$ are two explicit non-negative functions (defined in
\eqref{kappai} and \eqref{kappao}, respectively), with
$\kappa^\#(0)=0$, of course.  We refer to Theorems~\ref{lowerb} and
\ref{upperb} in Section~\ref{sec:temp} for the precise statements. The
resulting curves are plotted in Figure~\ref{fig1}. 

\begin{figure}[h]\centering\includegraphics[width=.9\textwidth]{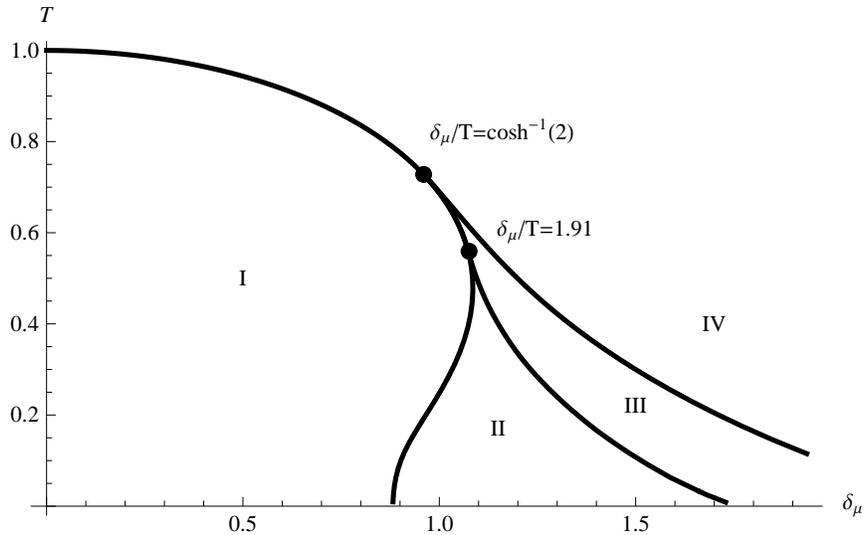}
  \caption{The curves $T^{\rm i}_{\delta_\mu}$, $T^{\rm g}_{\delta_\mu}$ and
    $T^{\rm o}_{\delta_\mu}$ (in increasing order), in units of $T_c$, the
    critical temperature in the balanced case.  In region I the system
    is in a superfluid phase. In region IV, there are no non-trivial
    solutions to the BCS gap equation, and the system is the normal
    phase. In region III we prove, under additional assumptions on
    $V$, that the system is in a normal phase, even though non-trivial
    solutions to the gap equation might
    exist.}\label{fig1}\end{figure}

\subsubsection{A more detailed phase diagram}\label{ss:det}

Under additional assumptions on the interaction potential $V$ we are
able to derive a sharper upper bound on the critical temperature, at
least for small coupling.  Let $T^{\rm g}_{\delta_\mu}$ be the curve
defined by
\begin{equation}\label{tcdef}
T^{\rm g}_{\delta_\mu}(\lambda V)=\{(\delta_\mu,T):\inf_{y>0} \inf {\rm spec}
(K^{y}_{T,\delta_\mu}+ \lambda V)=0\}.
\end{equation}
Note the difference with $T^{\rm o}_{\delta_\mu}$ in (\ref{def:to}), where
the minimum is taken pointwise for every fixed $p$ in $K^y_{T,\delta_\mu}(p)$.
Recall also from Remark \ref{rem1} that $K^{y}_{T,\delta_\mu}$ is not
monotone in $y$ if $\cosh\left(\delta_\mu/T \right) > 2$, so the
minimum of the spectrum of $K^{y}_{T,\delta_\mu}+ \lambda V$ is not
necessarily attained at $y=0$.

In terms of $\alpha$ the BCS gap equation
\eqref{overgapeqn} can conveniently be written as
\begin{equation}
\left( K^\Delta_{T,\delta_\mu} + \lambda V\right) \alpha = 0.
\end{equation}
Let us {\em assume} that $ K^\Delta_{T,\delta_\mu} + \lambda V$ has
$0$ as its lowest eigenvalue. For minimizers of the functional $\F_T$,
this is the case, for instance, of $\hat V$ is negative, since in this
case $\hat \alpha$ is negative and hence the ground state, by a
Perron-Frobenius type argument.

In this case, we will show that $\Delta$ can be treated as a constant
in the small coupling limit, with its value given by the one it
attains on the Fermi surface. As a consequence, we shall show that if
$(\delta_\mu,T)$ lies above $T^{\rm g}_{\delta_\mu}$, the system is in
a normal phase (in the sense that $\alpha=0$ for the minimizer of
$\F_T$), even though the BCS gap equation can have non-trivial
solutions. Hence the sharper upper bound $T^{\rm g}_{\delta_\mu}$ on
the critical temperature for superfluidity holds in place of $T^{\rm
  o}_{\delta_\mu}$; see Fig.~\ref{fig1}.  The precise formulation of
our results is given in Theorem~\ref{tctheorem} in
Section~\ref{sec:temp}.

Numerical calculations in \cite{CalMo} (see, in particular, Fig.~2
there) predict that between the curves $T^{\rm i}_{\delta_\mu}$ and
$T_{\delta_\mu}^{\rm g}$ the gap equation has at least two
solutions. Moreover, there exists another dividing line inside this
region separating the parameter regimes where the BCS functional is
minimized by $\alpha=0$ or $\alpha\neq 0$, i.e., the normal and the
superfluid regime.

\subsubsection{A toy model in one dimension}

For the purpose of illustration let us now consider the BCS functional
for a one-dimensional system, where the particles interact via a
contact potential of the form $V(x)=-g \delta(x)$ with $g>0$.  In this
case the gap equation takes a very simple form, and the order
parameter $\Delta$ is constant. In fact, the BCS gap equation in this
case is
\begin{equation}\label{1d1}
\frac{1}{g}=\frac{1}{2\pi}\int_\R{\frac{1}{K^\Delta_{T, \delta_\mu} (p)}}dp\,.
\end{equation}
The critical temperatures $T^{\rm i}_{\delta_\mu}$
and $T^{\rm g}_{\delta_\mu}$ are given implicitly via the 
equations
\begin{equation}\label{tl1}
T^{\rm i}_{\delta_\mu}= \left\{(\delta_\mu,T):
\frac1g=\frac{1}{2\pi}\int_\R{\frac{1}{K^0_{T,\delta_\mu}(p)}}dp\right\}
\end{equation}
\begin{equation}\label{tu1}
T^{\rm g}_{\delta_\mu}= \left\{(\delta_\mu,T):\frac{1}{g}=\max_{\Delta}\frac{1}{2\pi}\int_\R{
\frac{1}{K^\Delta_{ T, \delta_\mu} (p)}}dp \right\}.
\end{equation}

The curve $T^{\rm i}_{\delta_\mu}$ encloses the region where the gap
equation \eqref{1d1} has exactly one solution. Above
$T^{\rm o}_{\delta_\mu}$ there are no solutions, while in-between 
\eqref{1d1} has exactly two solutions.  Numerically this is easy to
see, in fact. If one plots $ \frac{1}{2\pi}\int_\R{
  \frac{1}{K^\Delta_{ T, \delta_\mu} (p)}}dp$ as a function of
$\Delta$, one observes that the graph crosses $1/g$ either once, twice, or not at all.

\section{Preliminaries and proof of Theorem \ref{over1}}\label{over1sec}

We start by specifying the precise domain of definition of the BCS functions $\F_T$ in (\ref{freeenergy0}). 

\begin{definition}\label{defF}
Let $\de$ denote the set of functions $(\gamma_+,\gamma_-,\alpha)$,
with $\gamma_+,\gamma_-\in L^1(\R^3,(1+p^2)dp)$, $0\leq
\gamma_+(p),\gamma_-(p)\leq 1$, and $\alpha\in H^1(\R^3,dx)$,
satisfying $|\hat\alpha(p)|^2\leq \gamma_+(p)(1-\gamma_- (p))$ and
$|\hat\alpha(p)|^2\leq \gamma_-(p)(1-\gamma_+(p))$. 
\end{definition}

For $(\gamma_+,\gamma_-,\alpha)\in \de$, the corresponding $\Gamma(p)$
in (\ref{def:Gamma}) satisfies $0\leq \Gamma\leq 1$ as an operator on $\C^4$
for all $p\in \R^3$. Moreover, all the terms in $\F_T$ are well-defined under our assumptions on $V$.

Recall that $\bar \mu = (\mu_+ + \mu_-)/2$ denotes the average
chemical potential and $\delta_\mu = (\mu_+ - \mu_-)/2$. It is easy to see that in the absence of a
potential $V$, $\F_T$ is minimized on $\de$ by the {\em normal
state} given by $\al= 0$, 
$\gamma_+^0(p)=[e^{\frac{1}{T}[(p^2-\bar\mu)
-\delta_\mu]}+1]^{-1}$ and
$\gamma_-^0(p)=[e^{\frac{1}{T}[(p^2-\bar\mu)
+\delta_\mu]}+1]^{-1}$.

We start with the observation that $\F_T$ attains a minimum on $\de$.

\begin{Proposition}
There exists a  minimizer of $\F_T$ in $\de$.
\end{Proposition}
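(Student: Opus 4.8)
The plan is to establish existence of a minimizer via the direct method of the calculus of variations, so the first task is to identify a topology on $\de$ in which $\F_T$ is lower semicontinuous and along which minimizing sequences have a convergent subsequence. The natural choice is weak convergence: for a minimizing sequence $(\gamma_+^n,\gamma_-^n,\alpha^n)$, the bound $0\le\gamma_\pm^n\le1$ together with an a priori bound on $\int (1+p^2)\gamma_\pm^n\,dp$ (obtained from the fact that $\F_T$ is bounded below on $\de$ and that the kinetic terms control these integrals once the entropy and interaction terms are estimated) gives, after passing to a subsequence, $\gamma_\pm^n \rightharpoonup \gamma_\pm$ weakly in $L^1(\R^3,(1+p^2)dp)$ — or more safely weak-$*$ in $L^\infty$ plus weak-$L^1$ for the weighted measure — and $\alpha^n \rightharpoonup \alpha$ weakly in $H^1(\R^3)$. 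One must first check that $\inf_\de \F_T > -\infty$; this follows by comparing with the normal state and using that $\lambda\int |\alpha|^2 V \ge -\lambda\|V\|_{3/2}\|\alpha\|_6^2$ (by Hölder and Sobolev), which is controlled by a small fraction of the $H^1$-norm of $\alpha$, hence by the kinetic energy, after noting $\|\alpha\|_{H^1}^2$ is tied to $\int(1+p^2)(\gamma_++\gamma_-)$ through the constraint $|\hat\alpha|^2\le \gamma_\pm(1-\gamma_\mp)\le\gamma_\pm$.

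Next I would verify that the limit $(\gamma_+,\gamma_-,\alpha)$ still lies in $\de$: the pointwise bounds $0\le\gamma_\pm\le1$ are preserved under weak limits (the set of such functions is convex and closed), $\alpha\in H^1$ is immediate, and the two quadratic constraints $|\hat\alpha(p)|^2\le\gamma_+(p)(1-\gamma_-(p))$ and $|\hat\alpha(p)|^2\le\gamma_-(p)(1-\gamma_+(p))$ pass to the limit because, after extracting a further subsequence converging almost everywhere (using Rellich locally for $\hat\alpha^n$, or a diagonal argument), the inequalities hold a.e.\ in the limit — the right-hand sides are weakly continuous enough in this pointwise sense since products of a.e.-convergent, uniformly bounded sequences converge a.e. Then I would show lower semicontinuity of $\F_T$: the linear kinetic terms $\frac12\int(p^2-\mu_\pm)\gamma_\pm$ are not sign-definite, so I would regroup, writing the kinetic plus entropy part as the relative entropy of $\Gamma(p)$ with respect to the normal-state $\Gamma^0(p)$ plus the (finite) free energy of the normal state; the relative entropy $\int\Tr_{\C^4}[\ldots]$ is jointly convex and weakly lower semicontinuous in $\Gamma$, hence in $(\gamma_+,\gamma_-,\alpha)$. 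The interaction term $\lambda\int|\alpha(x)|^2V(x)dx$ is weakly continuous in $H^1$ because $\alpha^n\to\alpha$ strongly in $L^p_{\rm loc}$ for $p<6$ and $V\in L^{3/2}$ can be split into a bounded-support $L^{3/2}$ piece and a small $L^{3/2}$ piece; the former gives strong convergence of $\int|\alpha^n|^2 V$ on the support, the latter is uniformly small by Sobolev.

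The main obstacle I anticipate is the lack of compactness at infinity in momentum space: the weighted $L^1$ bound on $\gamma_\pm^n$ does not by itself prevent mass from escaping to $|p|\to\infty$, and more importantly there is no uniform integrability a priori, so weak $L^1$ convergence could fail (the weighted measure has infinite total mass). The remedy is to use the entropy term, which penalizes $\gamma_\pm$ being far from $0$ or $1$: for large $|p|$ the normal state $\gamma_\pm^0(p)$ is exponentially small, and the relative-entropy representation forces $\gamma_\pm^n$ to be close to $\gamma_\pm^0$ there in an integrable way, yielding the uniform integrability (tightness) needed to upgrade weak-$*$ convergence to genuine weak $L^1(\R^3,(1+p^2)dp)$ convergence and to prevent loss of mass in the kinetic term. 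Once tightness is in hand, the pieces combine: $\liminf_n\F_T(\gamma_+^n,\gamma_-^n,\alpha^n)\ge\F_T(\gamma_+,\gamma_-,\alpha)\ge\inf_\de\F_T$, so the limit is a minimizer. This mirrors the argument in the balanced case \cite{HHSS}, and I would structure the write-up to reuse as much of that reasoning as possible, noting only the modifications caused by the two distinct chemical potentials and the $4\times4$ (rather than $2\times2$) structure of $\Gamma$.
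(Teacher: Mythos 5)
Your overall strategy---direct method, coercivity and lower semicontinuity via the relative entropy with respect to the normal state, weak closedness of the constraint set, and a splitting of $V\in L^{3/2}$ to handle the interaction term---is exactly the route the paper intends: it gives no proof of its own but refers to \cite[Proposition 2]{HHSS}, whose argument you are reproducing with the obvious modifications for two chemical potentials and the $4\times 4$ matrix $\Gamma$. There is, however, one step whose justification as written would fail: passing the pointwise constraints $|\hat\alpha(p)|^2\leq\gamma_+(p)(1-\gamma_-(p))$ and $|\hat\alpha(p)|^2\leq\gamma_-(p)(1-\gamma_+(p))$ to the limit by extracting an a.e.\ convergent subsequence of $\hat\alpha^n$ and $\gamma_\pm^n$ ``using Rellich locally for $\hat\alpha^n$''. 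Rellich acts on $\alpha^n$ in position space (giving $\alpha^n\to\alpha$ in $L^2_{\rm loc}(dx)$), not on the Fourier transforms: the available bounds ($\hat\alpha^n$ bounded in $L^2((1+p^2)dp)$, $0\leq\gamma^n_\pm\leq1$) yield only weak, respectively weak-$*$, convergence in momentum space, and weakly convergent sequences need not have a.e.\ convergent subsequences (take, e.g., $\gamma_+^n(p)=\frac{1}{2}(1+\sin(np_1))\chi(p)$ with $\chi$ a fixed cutoff); there is no derivative or equicontinuity bound on $\hat\alpha^n$ in the variable $p$ that would give local compactness there. So the mechanism you invoke for the crucial constraint-passing step does not exist.

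The repair is standard and is precisely what the structure of Definition 1 provides: the two quadratic inequalities together with $0\leq\gamma_\pm\leq1$ are equivalent to $0\leq\Gamma(p)\leq1$, and $\Gamma(p)$ depends \emph{affinely} on $(\gamma_+(p),\gamma_-(p),\hat\alpha(p))$. Hence the admissible set is convex, and it is closed under strong local convergence (a.e.\ limits preserve the closed matrix inequalities), so by Mazur's lemma it is weakly closed --- which is all your weak-$*$/weak limits require; no pointwise convergence of the minimizing sequence is needed. Two smaller points: the estimate $\lambda\int|\alpha|^2V\geq-\lambda\|V\|_{3/2}\|\alpha\|_6^2$ is not by itself ``a small fraction'' of the kinetic energy unless $\lambda\|V\|_{3/2}$ is small; to get relative form bound zero you should split $V$ into a bounded piece plus an $L^{3/2}$-small piece, exactly as you do later for the weak continuity of the interaction term. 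Also, the tightness argument at infinity is dispensable: since $(p^2-\mu_\pm)_+\gamma^n_\pm\geq0$, lower semicontinuity of the positive kinetic part follows by testing against $(p^2-\mu_\pm)_+$ truncated to $p^2<R$ and letting $R\to\infty$, so mass escaping to large $|p|$ can only increase the energy, and the same truncation shows the limit lies in $L^1(\R^3,(1+p^2)dp)$.
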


The proof is analogous to \cite[Proposition 2]{HHSS}, and we skip it for simplicity. 

Let us now come to the proof of our main Theorem~\ref{over1}.
We start with some preliminaries. Diagonalizing $\Gamma$ allows to rewrite
the entropy (\ref{def:entropy})  in the  form
\begin{multline} S(\Gamma)=
-\int_{\R^3}[(r+w)\ln(r+w)+(r-w)\ln(r-w)\\
+(s+w)\ln(s+w)+(s-w)\ln(s-w)]dp,
\end{multline}
where the functions $s$, $r$ and $w$ are given by
$$r=\frac{1}{2}(1+\gm_{+}-\gm_{-}), \,\, s=\frac{1}{2}(1-\gm_{+}+\gm_{-}), \,\, w=\frac{1}{2}\sqrt{(1-\gm_{+}-\gm_{-})^2+4|\hat\alpha|^2}.$$ 

\begin{Lemma}\label{l:gapequation}
  Let $(\gm_+, \gm_-,\al)\in \de$ be a minimizer of $\F_T$ for $0<T<\infty$. Then, for a.e. $p\in\R^3$,
\begin{equation}\label{freeenergy4}
(\hat V\ast \hat\al)(p)= -\frac{T}{4}
\hat\alpha(p)\left(f_r{(w)}+f_s{(w)}\right)
\end{equation}
\begin{equation}\label{freeenergy7}
\delta_\mu=\frac{\mu_{+}-\mu_{-}}{2}= \frac{T}{2} \ln
\left(\frac{r^2-w^2}{s^2-w^2}\right),
\end{equation}
\begin{equation}\label{freeenergy8}
(p^2 - \bar\mu)= \frac{T}{4} (1-\gm_+ - \gm_- )(f_r (w) +
f_s(w)),
\end{equation}
where $ f_a(b) =
\frac{1}{b}\ln \frac{a+b}{a-b}$.
\end{Lemma}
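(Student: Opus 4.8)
The plan is to derive the three Euler--Lagrange equations \eqref{freeenergy4}--\eqref{freeenergy8} by differentiating $\F_T$ at a minimizer along admissible variations, exploiting the fact that the interior of $\de$ is nonempty and that, near a minimizer with $\al\neq 0$, the constraints $|\hat\al|^2\le\gm_+(1-\gm_-)$ etc.\ are typically not active (one checks separately that if the minimizer had $\al\equiv 0$ the asserted equations reduce to statements about the normal state, so the interesting case is $\al\neq 0$, and even when $\al=0$ a.e.\ the identities below still make sense as limits). The key observation is that the free energy, when written via $r,s,w$, depends on $(\gm_+,\gm_-,\al)$ in a smooth way away from the boundary, so stationarity gives pointwise (in $p$) relations after taking variations $\gm_+\to\gm_++t\,\eta_+$, $\gm_-\to\gm_-+t\,\eta_-$, $\hat\al\to\hat\al+t\,\xi$.

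First I would record the partial derivatives of the entropy integrand $\sigma(r,s,w):=(r+w)\ln(r+w)+(r-w)\ln(r-w)+(s+w)\ln(s+w)+(s-w)\ln(s-w)$, namely $\partial_r\sigma=\ln\!\big((r+w)(r-w)\big)=\ln(r^2-w^2)$, similarly $\partial_s\sigma=\ln(s^2-w^2)$, and $\partial_w\sigma=\ln\frac{r+w}{r-w}+\ln\frac{s+w}{s-w}= w\,(f_r(w)+f_s(w))$ with $f_a(b)=\tfrac1b\ln\tfrac{a+b}{a-b}$. Next I would compute the chain-rule derivatives of $r,s,w$ with respect to $\gm_+,\gm_-$ and $|\hat\al|^2$: from the given formulas, $\partial_{\gm_+}r=\tfrac12=-\partial_{\gm_+}s$, $\partial_{\gm_-}r=-\tfrac12=-\partial_{\gm_-}s$, $\partial_{\gm_\pm}w=-\tfrac12\cdot\tfrac{1-\gm_+-\gm_-}{2w}$, and $\partial_{|\hat\al|^2}w=\tfrac{1}{2w}$. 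Then I would vary each of the three functions in turn.

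Varying $\gm_+$: the kinetic term contributes $\tfrac12(p^2-\mu_+)$, and the $-\tfrac T2 S$ term contributes $-\tfrac T2\big(\tfrac12\partial_r\sigma-\tfrac12\partial_s\sigma+\partial_{\gm_+}w\,\partial_w\sigma\big)$. Varying $\gm_-$: kinetic term $\tfrac12(p^2-\mu_-)$ plus $-\tfrac T2\big(-\tfrac12\partial_r\sigma+\tfrac12\partial_s\sigma+\partial_{\gm_-}w\,\partial_w\sigma\big)$. Adding and subtracting these two stationarity conditions: the sum kills the $\partial_r\sigma-\partial_s\sigma$ terms and, using $\mu_\pm=\bar\mu\pm\delta_\mu$ so $\tfrac12(p^2-\mu_+)+\tfrac12(p^2-\mu_-)=p^2-\bar\mu$, yields \eqref{freeenergy8} after plugging in $\partial_w\sigma=w(f_r+f_s)$ and $\partial_{\gm_+}w+\partial_{\gm_-}w=-\tfrac{1-\gm_+-\gm_-}{2w}$; the difference of the two conditions isolates $\tfrac12(p^2-\mu_+)-\tfrac12(p^2-\mu_-)=\delta_\mu$ against $-\tfrac T2(\partial_r\sigma-\partial_s\sigma)=-\tfrac T2\ln\tfrac{r^2-w^2}{s^2-w^2}$, i.e.\ $\delta_\mu=\tfrac T2\ln\tfrac{r^2-w^2}{s^2-w^2}$, which is \eqref{freeenergy7}. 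Finally, varying $\hat\al$: the interaction term $\lambda\int|\al|^2V=\lambda\int\overline{\hat\al}\,(\widehat{V\ast\al})$ — more precisely, using Plancherel and that $\widehat{V|\al|^2\text{-pairing}}$ gives $\lambda\,\overline{(\hat V\ast\hat\al)}$ as the functional derivative with respect to $\hat\al$ — contributes $\lambda(\hat V\ast\hat\al)$ (after matching the $(2\pi)^{-3/2}$ convention), while $-\tfrac T2 S$ contributes $-\tfrac T2\,\partial_{|\hat\al|^2}w\cdot\partial_w\sigma\cdot\overline{(2\hat\al)}$-type term; collecting and dividing by $2$ gives $\lambda(\hat V\ast\hat\al)=-\tfrac T4\hat\al(f_r(w)+f_s(w))$, which is \eqref{freeenergy4} (the $\lambda$ is absorbed into $V$ or carried along per the paper's convention).

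The main obstacle, and the step deserving the most care, is justifying that one may take these variations freely — i.e.\ that the minimizer lies in a region where none of the three inequality constraints defining $\de$ (including $0\le\gm_\pm\le1$) is active, so that the first-order conditions are genuine equalities rather than inequalities with multipliers. The standard resolution, following \cite{HHSS}, is to argue that at a minimizer with $\al\neq0$ one necessarily has strict inequalities $0<\gm_\pm<1$ and $|\hat\al|^2$ strictly below both products $\gm_+(1-\gm_-)$ and $\gm_-(1-\gm_+)$ for a.e.\ $p$ on the support, because the entropy has an infinite derivative (logarithmic blow-up of $\partial_r\sigma,\partial_s\sigma,\partial_w\sigma$) as any of $r\pm w$ or $s\pm w$ approaches $0$; this forces the energy to decrease if one moved toward the boundary, so interior stationarity holds. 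A secondary technical point is differentiability under the integral sign and integrability of the derivative integrands, which follows from the $L^{3/2}$ assumption on $V$, the $H^1$ regularity of $\al$, and the weight $(1+p^2)$ in the definition of $\de$; these are routine and I would merely cite the analogous estimates in \cite{HHSS}.
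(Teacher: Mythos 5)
Your proposal follows essentially the same route as the paper: compute the first variations of $\F_T$ in $\gm_+$, $\gm_-$ and $\al$, set them to zero at the minimizer, obtain the two relations for $p^2-\mu_\pm$ (the paper's \eqref{freeenergy5}--\eqref{freeenergy6}) and add/subtract them to get \eqref{freeenergy7}--\eqref{freeenergy8}, while handling the constraints exactly as the paper does, namely by the HHSS-type argument; the paper implements this by restricting the variations to the sets $A_\epsilon$ where the constraints are slack by at least $\epsilon$ and then arguing that the complement $B=\R^3\setminus\cup_\epsilon A_\epsilon$ has measure zero, which is the same content as your ``the constraints are not active a.e.'' step. So the structure is fine, and you correctly identify the constraint issue as the only point needing care.

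However, the sign bookkeeping in your computation is off in a way you should fix. Since $S=-\int\sigma\,dp$ with your $\sigma$, the term $-\tfrac T2 S$ equals $+\tfrac T2\int\sigma\,dp$, so its first variation enters with a \emph{plus} sign, not the $-\tfrac T2(\tfrac12\partial_r\sigma-\tfrac12\partial_s\sigma+\partial_{\gm_+}w\,\partial_w\sigma)$ you wrote; likewise $\tfrac12(p^2-\mu_+)-\tfrac12(p^2-\mu_-)=-\delta_\mu$, not $+\delta_\mu$. These two slips happen to cancel in your derivation of \eqref{freeenergy7}, but taken literally your displayed signs would produce \eqref{freeenergy8} and \eqref{freeenergy4} with an overall minus sign; with the corrected sign of the entropy contribution the computation does give the stated equations (and, in \eqref{freeenergy4}, the factor $\lambda$ that you rightly note should accompany $\hat V\ast\hat\al$). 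A harmless further slip: $\partial_r\sigma=\ln(r^2-w^2)+2$ and $\partial_s\sigma=\ln(s^2-w^2)+2$; the additive constants drop out because $r+s=1$ means only the difference $\partial_r\sigma-\partial_s\sigma$ (and $\partial_w\sigma$) ever appears, but the formulas as stated are not the derivatives.
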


\begin{proof}
  Let
  $A_\epsilon:=\{p\in\R^3:(|\hat\al(p)|^2-\gm_+(1-\gm_-))(|\hat\al(p)|^2-\gm_-(1-\gm_+))
  \geq \epsilon\}$ for some $\epsilon >0$. Let $g\in H^1(\R^3)\cap L^1(\R^3)$ be such that
  the Fourier transform  $\hat g$ of $g$ is supported in $A_\epsilon$. Then for small enough $|t|$,
  $(\gamma_+,\gm_-,\alpha+t g)\in \de$. By the Lebesgue
  dominated convergence theorem, it follows that

\begin{align}\label{freeenergy1}\nonumber
\frac{d}{dt}\F_T(\gm_{+},\gm_{-},\al+tg)_{t=0}=& 2\ {\rm \Re\,
}\int\alpha(x) \overline {g(x)}V(x)dx  \\ & + \frac{T}{2} {\rm \Re\,
}\int{\hat\al(p) \overline{\hat g(p)}(f_r{(w)}+f_s{(w)})}dp\,.
\end{align}
Here $\Re$ denotes the real part. Also, for $\hat g$ real,
$(\gamma_{+}+t\hat g,\gm_-,\alpha)\in\de$ for small $|t|$, and we
get
\begin{multline}\label{freeenergy3}
\frac{d}{dt}\F_T(\gm_{+}+t \hat g,\gm_{-},\al)_{t=0}=
\frac{1}{2}\int (p^2-\mu_{ +})\hat g(p)dp   \\ +\frac{T}{2}\int
\left[\frac{(\gm_{+}+\gm_{-}-1)}{2}(f_r{(w)}+f_s{(w)})+\ln
\left(\frac{r^2-w^2}{s^2-w^2}\right)\right]dp
\end{multline}
and, similarly,
\begin{multline}\label{freeenergy2}
\frac{d}{dt}\F_T(\gm_{+},\gm_{-}+t \hat g,\al)_{t=0}=
\frac{1}{2}\int (p^2-\mu_{-})\hat g(p)dp  \\ +\frac{T}{2}\int
\left[\frac{(\gm_{+}+\gm_{-}-1)}{2}(f_r{(w)}+f_s{(w)})-\ln
\left(\frac{r^2-w^2}{s^2-w^2}\right)\right]dp\,.
\end{multline}
Since $(\gm_+,\gm_-,\al)$ minimize $\F_T$ by assumption, the
expressions in (\ref{freeenergy1}),(\ref{freeenergy3}) and
(\ref{freeenergy2}) vanish. This implies that (\ref{freeenergy4}) holds,
and also that 
\begin{equation}\label{freeenergy5}
(p^2-\mu_{-})=\frac{T}{2}\left[\frac{(1-\gm_{+}-\gm_{-})}{2}
(f_r{(w)}+f_s{(w)})+\ln \left(\frac{r^2-w^2}{s^2-w^2} \right)\right]
\end{equation}
\begin{equation}\label{freeenergy6}
(p^2-\mu_{+})=\frac{T}{2}\left[\frac{(1-\gm_{+}-\gm_{-})}{2}(f_r{(w)}+f_s{(w)})-\ln
\left(\frac{r^2-w^2}{s^2-w^2}\right)\right]
\end{equation}
for a.e. $p\in A_\epsilon$. As in \cite{HHSS} we can argue that
the measure of $B:=\R^3 \setminus \cup_{\epsilon>0} A_\epsilon$ is
zero. Subtracting and adding Eqs.~(\ref{freeenergy5}) and
(\ref{freeenergy6}) implies \eqref{freeenergy7} and
\eqref{freeenergy8}.
\end{proof}

\begin{proof}[Proof of Theorem~\ref{over1}]

  We consider the case where $T>0$ in the proof, and leave the analogous case $T=0$
  to the reader. We proceed similarly to \cite[Theorem 1]{HHSS}.
  To show (i), note that
\begin{equation}\label{tempoperator}
K^{0}_{T, \delta_\mu}(p) = \frac{2(p^2 - \bar\mu)}{\tanh
\left(\frac{p^2 - \bar\mu+\delta_\mu}{2T} \right)+\tanh
\left(\frac{p^2 - \bar\mu-\delta_\mu}{2T} \right)} =
\frac{p^2 -\bar\mu}{1-\gamma_+^0 -\gamma_-^0}\,,
\end{equation}
where $\gamma_+^0$ and $\gamma_-^0$ are the momentum distributions in
the normal state defined in the beginning of this section.  We shall
show that if the normal state $\alpha=0$ minimizes $\F_T$, then
$\infspec(K^{0}_{T, \delta_\mu} +\lambda V)\geq 0$.

Since it is true that for every $g \in C_0^\infty$, $||
(1-\gamma_+^0 - \gamma_-^0)^2 +(tg)^2||_\infty <1$ for small enough
$|t|$, it follows from the Lebesgue dominated convergence theorem
that $\frac{d^2}{dt^2}\F_T(\gamma_+^0,\gamma_-^0,t\hat g)$ exists
for small $t$, and equals
\begin{align*}
&\left.\frac{d^2}{dt^2}\F_T(\gm_{+}^0,\gm_{-}^0,t \hat g)\right|_{t=0} \\ &= 2 \lambda \int |\hat g(x)|^2
V(x)dx +T \int {|g(p)|^2 \frac{\left
[ \ln{(\frac{r^0 + w^0}{r^0 - w^0})} + \ln{(\frac{s^0 + w^0}{s^0 -
w^0})}  \right]}{(1-\gamma_+^0 - \gamma_-^0)}}dp
\end{align*}
 at $t=0$. 
Here
$r^0=\frac{1}{2}(1+\gm_{+}^0-\gm_{-}^0)$,
$s^0=\frac{1}{2}(1-\gm_{+}^0+\gm_{-}^0)$ and
$w^0=\frac{1}{2}(1-\gm_{+}^0-\gm_{-}^0)\,.$ It is easy to see that
$$\ln{\left(\frac{r^0 + w^0}{r^0 - w^0}\right)}+\ln{\left(\frac{s^0 + w^0}{s^0 -
w^0}\right)}= \frac{2(p^2-\bar\mu)}{T}\,,$$ and hence
$$\frac{d^2}{dt^2}\F_T(\gm_{+}^0,\gm_{-}^0,t \hat g)_{t=0}= 2 \lambda \int |\hat g(x)|^2
V(x)dx +2\int K^0_{T,\delta_\mu}(p) |g(p)|^2 dp\,.$$ If the normal
state minimizes $\F_T$ then clearly
$\frac{d^2}{dt^2}\F_T(\gm_{+}^0,\gm_{-}^0,t \hat g)_{t=0}\geq 0$
(keeping in mind that it is also stationary).  This implies that
$\bra \hat g,(K^{0}_{T, \delta_\mu}+\lambda V) \hat g\ket \geq 0$ for
all $g \in C_0^\infty$, and hence proves the claim.

Next we shall show (ii). For a minimizer $(\gamma_+,
\gamma_-, \al)$ of $\F_{T}$, we can define $\Delta$ via
\begin{equation}\label{relda}
\Delta (p) =
2\dfrac{p^2-\bar\mu}{1-\gamma_+-\gamma_-}\hat\alpha(p)\,.
\end{equation}
Setting further $E(p) = \sqrt{(p^2 - \bar \mu)^2 +
|\Delta(p)|^2}$ and recalling that
$$ 2w = \sqrt{(1-\gm_+ - \gm_- )^2 +4|\hat\alpha|^2} $$
we obtain that
$$
E(p)  = 2w(p) \left| 
\frac{p^2-\bar\mu}{1-\gm_+(p) - \gm_-(p) } \right|\,.
$$
If we further use (\ref{freeenergy8}) we get 
\begin{equation}\label{ae1}
E(p) = w\frac{T}{2}(f_r(w)+f_s(w)).
\end{equation}
It then follows from the definition of $\Delta(p)$ and $E(p)$ that
\begin{equation}\label{alpha1}
\frac{\Delta(p)}{E(p)} w(p) = \hat\al(p).
\end{equation}
Noting that
$$ (f_r(w) + f_s(w))=\frac{1}{\omega}\ln{\frac{(r+w)(s+w)}{(r-w)(s-w)}}, $$
we obtain from (\ref{freeenergy7}) and the definition of $E(p)$ that
$$\frac{E(p) + \delta_\mu}{T} = \ln \left({\frac{r+w}{s-w}}\right),
\quad \frac{E(p) - \delta_\mu}{T} = \ln \left({\frac{s+w}{r-w}}\right).$$
Since it is also true that
\begin{equation}\label{w}
1 - \left(1+\frac{r+w}{s-w} \right)^{-1} -
\left(1+\frac{s+w}{r-w}\right)^{-1}=2w,
\end{equation}
we get from (\ref{freeenergy4}) and (\ref{alpha1}) that
\begin{equation}\label{vad}
-\lambda V * \hat\al=\frac{\Delta(p)}{2},
\end{equation}
and hence
$$
\Delta = -2\lambda V * \hat\al = -\lambda V*\frac{\Delta}{E}
\left(1 - \frac{1}{1+\frac{r+w}{s-w}} -
\frac{1}{1+\frac{s+w}{r-w}}\right).
$$
From the above expressions, we observe that
$$e^{\frac{E(p) + \delta_\mu}{T}} = \frac{r+w}{s-w}, \quad e^{\frac{E(p) - \delta_\mu}{T}} = \frac{s+w}{r-w}.$$
Therefore we arrive at the BCS gap equation
\begin{equation}
\Delta = -\lambda
V*\frac{\Delta}{E}\left(1-\left(1+e^{\frac{E(p)+\delta_\mu}{T}}
\right)^{-1} -\left(1+e^{\frac{E(p)-\delta_\mu}{T}}
\right)^{-1}\right)\,.
\end{equation}

To see (iii), note that the gap equation can be written
as $$(K^{\Delta}_{T, \delta_\mu} +\lambda V) \alpha = 0\,,$$ with
$\hat\alpha(p)=\Delta(p)/K^{\Delta(p)}_{T,\delta_\mu}(p)$.  Consequently,
\begin{equation}
\bra \al,(\widetilde K _{T, \delta_\mu}+\lambda V) \al\ket = \bra
\al,(\widetilde K _{T, \delta_\mu}-K^{\Delta}_{T, \delta_\mu})
\al\ket \leq 0
\end{equation}
by the definition of $ \widetilde K _{T, \delta_\mu}$. The inequality
is, in fact, strict if $\Delta$ is not identically zero, since $\hat
\al$ and $\Delta$ have the same support. This completes the proof of
Theorem~\ref{over1}.
\end{proof}

\begin{corollary}\label{cor1}
Assume that $\cosh(\delta_\mu/T) \leq 2$. Then the existence of a non-trivial solution of the BCS gap equation is {\em equivalent} to $K_{T,\delta_\mu}^0 + \lambda V$ having a negative eigenvalue.
\end{corollary}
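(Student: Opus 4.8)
The plan is to first show that the hypothesis $\cosh(\delta_\mu/T)\le 2$ forces $\widetilde K_{T,\delta_\mu}(p)=K^0_{T,\delta_\mu}(p)$ for every $p$, and then to read off the asserted equivalence directly from Theorem~\ref{over1}. For the monotonicity, fix $p$, set $\varepsilon=|p^2-\bar\mu|$, and write $K^x_{T,\delta_\mu}(p)=g(E)$ with $E=\sqrt{\varepsilon^2+x^2}$ and $g(E)=2E\big/\big(\tanh\frac{E+\delta_\mu}{2T}+\tanh\frac{E-\delta_\mu}{2T}\big)$. Since $E$ is strictly increasing in $x>0$ and $\varepsilon$ runs over all of $[0,\infty)$ as $p$ varies, the identity $\widetilde K_{T,\delta_\mu}=K^0_{T,\delta_\mu}$ for all $p$ is equivalent to $g$ being non-decreasing on $[0,\infty)$. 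By the hyperbolic addition formulas the denominator equals $2\sinh(E/T)\big/\big(\cosh(E/T)+\cosh(\delta_\mu/T)\big)$, so with $y=E/T\ge 0$ and $c_0=\cosh(\delta_\mu/T)\ge 1$ one has $g=T\,G(y)$ where $G(y)=y(\cosh y+c_0)/\sinh y$; a direct computation gives that $G'(y)$ has the sign of
\[
N(y):=\tfrac12\sinh(2y)-y-c_0\,(y\cosh y-\sinh y)\,.
\]
Since $\tfrac12\sinh(2y)-y\ge 0$ and $y\cosh y-\sinh y\ge 0$ on $[0,\infty)$ (each vanishing only at $0$), the requirement $N\ge 0$ is equivalent to $c_0\le\inf_{y>0}R(y)$, where $R(y)=\big(\tfrac12\sinh(2y)-y\big)\big/\big(y\cosh y-\sinh y\big)$.

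The heart of the matter, and the step I expect to require the most care, is the sharp identity $\inf_{y>0}R(y)=2$. I would prove $R(y)\ge 2$ by clearing denominators (using $\tfrac12\sinh 2y=\sinh y\cosh y$), reducing it to $S(y)\ge 0$ for $S(y)=\sinh y\,(\cosh y+2)-y\,(2\cosh y+1)$; this holds because $S(0)=0$ and $S'(y)=2\sinh y\,(\sinh y-y)\ge 0$. That the constant $2$ is best possible follows by Taylor-expanding the numerator and denominator of $R$ to third order in $y$, which gives $R(y)\to 2$ as $y\to 0^+$ — precisely why the threshold reads $\cosh(\delta_\mu/T)\le 2$, and which also explains the failure of monotonicity when $\cosh(\delta_\mu/T)>2$ noted in Remark~\ref{rem1}. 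Hence, under the hypothesis, $N\ge 0$, so $G$ and therefore $g$ is non-decreasing, and $\widetilde K_{T,\delta_\mu}=K^0_{T,\delta_\mu}$ on all of $\R^3$.

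Finally I would assemble the equivalence from Theorem~\ref{over1}. If the BCS gap equation \eqref{overgapeqn} has a non-trivial solution, part (iii) together with $\widetilde K_{T,\delta_\mu}=K^0_{T,\delta_\mu}$ shows that $K^0_{T,\delta_\mu}(-i\nabla)+\lambda V$ has a negative eigenvalue. Conversely, if $K^0_{T,\delta_\mu}(-i\nabla)+\lambda V$ has a negative eigenvalue, part (i) gives a minimizer of $\F_T$ with $\alpha\neq 0$, and part (ii) produces $\Delta=-2\lambda\hat V\ast\hat\alpha$ solving \eqref{overgapeqn}; on the positive-measure set $\{\hat\alpha\neq 0\}$ the relation \eqref{alpha1}, together with $E>0$ a.e. and $w\ge|\hat\alpha|>0$ there, forces $\Delta\neq 0$, so the solution is non-trivial. (Under the hypothesis the degenerate case $T=0$ can occur only when $\delta_\mu=0$, where the statement reduces to the balanced one of \cite{HHSS}.)
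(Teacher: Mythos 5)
Your proposal is correct and follows essentially the same route as the paper: you establish the pointwise monotonicity of $K^{\Delta}_{T,\delta_\mu}$ in $\Delta$ under $\cosh(\delta_\mu/T)\leq 2$, conclude $\widetilde K_{T,\delta_\mu}=K^0_{T,\delta_\mu}$, and then read the equivalence off Theorem~\ref{over1}. The only difference is that you carry out in full the elementary monotonicity check (via $G(y)=y(\cosh y+c_0)/\sinh y$ and the inequality $S'(y)=2\sinh y(\sinh y-y)\geq 0$) that the paper leaves to the reader in an equivalent form, namely the monotonicity of $x\mapsto(\tanh(x+c)+\tanh(x-c))/x$ for $\cosh(2c)\leq 2$.
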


\begin{proof}
  The corollary follows directly from Theorem~\ref{over1} and the fact that $K^{\Delta}_{T,
    \delta_\mu}(p)$ is pointwise monotone in $\Delta$ for all
  $\delta_\mu/T \in [0,\cosh^{-1}(2)]$. Hence $K^0_{T,\delta_\mu}
  = \widetilde K_{T,\delta_\mu}$ in this case. To see this one checks that  the function $x \mapsto (\tanh (x+c) +
    \tanh(x-c))/x$ is monotone decreasing on $\R_+$ if
  $\cosh(2c) \leq 2$.
\end{proof}

\section{Bounds on the critical temperature}\label{sec:temp}

As explained in Subsect.~\ref{ss:tc}, the two linear operators in
Theorem~\ref{over1} determine upper and lower bounds $T^{\rm o}_{\delta_\mu}$
and $T^{\rm i}_{\delta_\mu}$ on the critical temperature. In
this section we evaluate $T^{\rm i}_{\delta_\mu}(\lambda V)$ and
$T^{\rm o}_{\delta_\mu}(\lambda V)$ in the limit of small $\lambda$. We will
proceed similarly to \cite{FHNS,HS,HS2}. Note that it follows from
Corollary~\ref{cor1} that for $\cosh(\delta_\mu/T)\leq 2$ the two
curves coincide, i.e., $T^{\rm i}_{\delta_\mu}(\lambda
V)=T^{\rm o}_{\delta_\mu}(\lambda V)$. (Compare with Fig.~\ref{fig1}.)

We start with $T^{\rm i}_{\delta_\mu}(\lambda V)$, defined in
(\ref{def:ti}).  The Birman-Schwinger principle implies that a pair
$(\delta_{\mu},T) \in T^{\rm i}_{\delta_\mu}(\lambda V)$ is
characterized by the fact that the compact operator
\begin{equation}\label{def:B}
B_{T,\delta_\mu} = \lambda |V|^{\frac12}\frac 1{K^{0}_{ T, \delta_{\mu}}}V^{\frac12}
\end{equation}
has $-1$ as its lowest eigenvalue. Here we use the
notation $$V(x)^{1/2} = ({\rm sgn\,} V(x)) |V(x)|^{1/2}.$$ That $-1$
has to be the lowest eigenvalue of $B_{T,\delta_\mu}$ follows from the
monotonicity of $K^0_{T,\delta_\mu}$ in $\delta_\mu$, similarly to
\cite[Lemma 3.1]{FHNS}.  In fact, since the spectrum of
$B_{T,\delta_\mu}$ goes to $0$ for large $\delta_\mu$ and is
continuous in $\delta_\mu$, an eigenvalue of $B_{T,\delta_\mu}$
smaller than $-1$ would correspond to an eigenvalue $-1$ for larger
$\delta_\mu$, which is in contradiction to the monotonicity of
$K^0_{T,\delta_\mu} + \lambda V$ in $\delta_\mu$ and $0$ being the lowest
eigenvalue for the critical parameter.

For $\bar\mu >0$, the operator $B_{T,\delta_\mu}$ becomes singular as $(\delta_\mu, T) 
\to (0,0)$, and the key observation is that its singular part is
represented by an operator $\V_{\bar\mu}: \,
L^2(\Omega_{\bar\mu}) \to L^2(\Omega_{\bar\mu})$, given by 
\begin{equation}\label{defvm}
 \big(\V_{\bar\mu} u\big)(p) =
  \frac 1{(2\pi)^{3/2}} \frac 1{\sqrt{\bar\mu}}\int_{\Omega_{\bar\mu}}\hat V(p-q) u(q) \,d\omega(q)\,,
\end{equation}
where $d\omega$ is the uniform Lebesgue measure on
$\Omega_{\bar\mu}$. We note that the operator $\V_\mu$ has
already appeared in the literature \cite{BY,LSW}. If we assume that
$V\in L^1(\R^3)$, then $\hat V(p)$ is a bounded continuous
function, and hence $\V_{\bar\mu}$ is a Hilbert-Schmidt
operator. In fact, $\V_{\bar\mu}$ is trace class, and its trace
equals $\frac{\sqrt{\bar\mu}}{2\pi^2} \int_{\R^3} V(x)dx$. Let
\begin{equation}\label{e-1}
e_{\bar\mu}= \inf {\rm spec} \,\V_{\bar\mu}
\end{equation}
denote the infimum of the spectrum of $\V_{\bar\mu}$.  Since
$\V_{\bar\mu}$ is compact, we have $e_{\bar\mu}\leq 0$.

Let $\F: L^1(\R^3) \to L^2(\Omega_{\bar\mu})$ denote the
bounded operator which maps $\psi\in L^1(\R^3)$ to the Fourier
transform of $\psi$, restricted to the Fermi sphere
$\Omega_{\bar\mu}$. For $V\in L^1(\R^3)$, multiplication by
$|V|^{1/2}$ is a bounded operator from $L^2(\R^3)$ to $L^1(\R^3)$,
and therefore $\F |V|^{1/2}$ is a bounded operator from $L^2(\R^3)$
to $L^2(\Omega_{\bar\mu})$.

Finally, as in \cite{HS}, we define $\W_{\bar\mu}$ via its
quadratic form, to be an operator on $L^2(\Omega_{\bar\mu})$ such that
\begin{align}\nonumber
  \langle u | \W_{\bar\mu} |u \rangle = \int_{0}^\infty d|p| & \left( \frac
    {|p|^2}{\big||p|^2-\bar\mu\big|} \left[ \int_{\S^2} d\Omega \left(
        |\hat\varphi(p)|^2 -
        |\hat\varphi(\sqrt{\bar\mu} p/|p|)|^2 \right)\right] \right. \\
  \label{defW} & \quad \left. + \frac 1{|p|^2} \int_{\S^2} d\Omega\,
    |\hat\varphi(\sqrt{\bar\mu} p/|p|)|^2\right) \,.
\end{align}
Here, $(|p|,\Omega)\in \R_+\times \S^2$ denote spherical coordinates
for $p\in\R^3$ and $\hat\varphi(p) = (2\pi)^{-3/2} \int_{\Omega_\mu}
\hat V(p-q) u(q) d\omega(q)$. It was shown in \cite{FHNS} that
(\ref{defW}) is well defined, despite the apparent singularity of the integral, and that $\W_{\bar\mu}$ is of
Hilbert-Schmidt class.

For $\lambda >0$, let
\begin{equation}\label{defB}
  \B_{\bar\mu} = \lambda \frac \pi {2\sqrt {\bar\mu}} \V_{\bar\mu}
- \lambda^2 \frac{\pi}{2{\bar\mu}} \W_{\bar\mu}\,,
\end{equation}
and let $\varrho(\lambda)$ denote its ground state energy,
\begin{equation}\label{defbm}
  \varrho(\lambda) = \inf {\rm spec} \, \B_{\bar\mu} \,.
\end{equation}
We note that if $e_{\bar\mu}<0$, then $\varrho(\lambda)< 0$ for
small $\lambda$.  If the (normalized) eigenfunction $u \in
L^{2}(\Omega_{\bar\mu})$ corresponding to $e_{\bar\mu}$ is
unique, then
\begin{equation}\label{ddeff}
\varrho(\lambda)=\bra u | \B_{\bar\mu} | u \ket +o(\lambda^3).
\end{equation}
In the degenerate case, this formula holds if one chooses $u$ to be
the eigenfunction of $\V_\mu$ that yields the largest value $\langle
u|\W_\mu|u\rangle$ among all such (normalized) eigenfunctions.

\subsection{Evaluation of $T^{\rm i}_{\delta_\mu}$.}

In this subsection we will derive the asymptotic behavior
\eqref{claim:tc} of $T^{\rm i}_{\delta_\mu}(\lambda V)$ for small
$\lambda$, which we present in the following theorem.

\begin{theorem}\label{lowerb}
  Let $V \in L^1 (\R^3) \cap L^{3/2}(\R^3)$ be real-valued and
  $\bar\mu>0$.  Assume that $e_{\bar\mu}$ defined in
  (\ref{e-1}) is strictly negative, and let $\varrho(\lambda)$ be
  defined as in (\ref{defbm}). Then any sequence of pairs
  $\left(\delta_\mu(\lambda),T(\lambda)\right)$ on the curve
  $T_{\delta_\mu}^{i}(\lambda V)$ satisfies
\begin{equation}\label{7-1}
\lim_{\lambda \to 0}\left( \ln
{\frac{\bar\mu}{T}} + \frac{\pi}{2
\sqrt{{\bar\mu}}\varrho(\lambda)} -
\kappa^{\rm i}(\delta_\mu/T)\right) = 2 - \gamma - \ln (8/\pi) 
\end{equation}
where  $\gamma\approx 0.5772$ is Euler's constant and 
\begin{equation}\label{kappai}\kappa^{\rm i}(t)=\frac{1}{1+e^{t}}\int_{0}^{\infty}
\frac{1-e^{-x}}{1+e^{x+t}}\,\frac{dx}x +
\frac{1}{1+e^{-t}}\int_{0}^{\infty}
\frac{1-e^{-x}}{1+e^{x-t}}\, \frac{dx}x - \ln \frac\pi 2\,.
\end{equation}
\end{theorem}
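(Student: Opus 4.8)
\textbf{Proof proposal for Theorem~\ref{lowerb}.}

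The plan is to implement the Birman--Schwinger reduction together with an asymptotic analysis of the operator $B_{T,\delta_\mu}$ defined in \eqref{def:B}, closely following the strategy of \cite{FHNS,HS} for the balanced case, while carefully tracking the $\delta_\mu$-dependence. Recall that $(\delta_\mu,T)\in T^{\rm i}_{\delta_\mu}(\lambda V)$ iff $-1$ is the lowest eigenvalue of $B_{T,\delta_\mu}=\lambda|V|^{1/2}\,K^{0}_{T,\delta_\mu}{}^{-1}\,V^{1/2}$. The first step is to split $K^0_{T,\delta_\mu}(p)^{-1}$ into a singular part, concentrated near the Fermi sphere $\Omega_{\bar\mu}$ and logarithmically divergent as $T\to0$, and a bounded remainder. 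Concretely, for $\bar\mu>0$ write
\begin{equation*}
\frac{1}{K^0_{T,\delta_\mu}(p)} = \frac{\tanh\frac{\xi+\delta_\mu}{2T}+\tanh\frac{\xi-\delta_\mu}{2T}}{2\xi}, \qquad \xi = p^2-\bar\mu,
\end{equation*}
and compare this with the corresponding balanced kernel $\tanh(\xi/2T)/\xi$. The difference between the two, integrated against the angular average of $|\hat\varphi|^2$ near $|p|^2=\bar\mu$, produces precisely the extra term $\kappa^{\rm i}(\delta_\mu/T)$ in \eqref{7-1}: after the change of variables $x = \xi/(2T)$ (or $x=|\xi|/(2T)$ on each side of the Fermi surface) the $T\to0$ and $\delta_\mu\to0$ scales decouple, and the finite part of the integral $\int \big(\tfrac{\tanh(x+t)+\tanh(x-t)}{2}-\tanh x\big)\tfrac{dx}{x}$ with $t=\delta_\mu/T$ is exactly $\kappa^{\rm i}(t)$ as written in \eqref{kappai} (the split into the two weights $1/(1+e^{\pm t})$ and the shifted integrands $(1-e^{-x})/(1+e^{x\pm t})$ comes from rewriting $\tanh$ in exponential form). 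The balanced part contributes, as in \cite{FHNS,HS}, the leading logarithm $\ln(\bar\mu/T)$ together with the constant $2-\gamma-\ln(8/\pi)$.

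The second step is the operator-theoretic heart, carried over verbatim from \cite{FHNS,HS}: one shows that on the relevant spectral subspace $B_{T,\delta_\mu}$ is, up to errors $o(\lambda^3)$ (respectively negligible relative to the singular scale), unitarily equivalent to
\begin{equation*}
\lambda \, m_\mu(T,\delta_\mu)\, \F|V|^{1/2}\big(\F|V|^{1/2}\big)^* \;+\; \lambda\, |V|^{1/2}\,\mathcal{M}_{T,\delta_\mu}\,V^{1/2} ,
\end{equation*}
where $m_\mu(T,\delta_\mu)$ is the divergent scalar coefficient extracted in step one and the second operator is uniformly bounded; here $\F|V|^{1/2}(\F|V|^{1/2})^*$ is a rescaling of $\V_{\bar\mu}$, and the bounded correction, when projected onto the range, reproduces $\W_{\bar\mu}$. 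Feeding the lowest eigenvalue $e_{\bar\mu}<0$ of $\V_{\bar\mu}$ (and, to next order, $\W_{\bar\mu}$, i.e. the combination $\varrho(\lambda)$ from \eqref{defbm}) into the condition ``lowest eigenvalue $=-1$'' yields $1 = -\varrho(\lambda)\,m_\mu(T,\delta_\mu)\,(1+o(1))$, equivalently
\begin{equation*}
m_\mu(T,\delta_\mu) = -\frac{1}{\varrho(\lambda)} + o(1).
\end{equation*}
One then identifies $m_\mu(T,\delta_\mu)$ from step one as $\frac{\sqrt{\bar\mu}}{\pi}\Big(\ln\frac{\bar\mu}{T} - \kappa^{\rm i}(\delta_\mu/T) + \ln(8/\pi) - 2 + \gamma + o(1)\Big)$, and solving for the bracket gives exactly \eqref{7-1}. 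The monotonicity of $K^0_{T,\delta_\mu}$ in $\delta_\mu$ (already invoked in the excerpt for the Birman--Schwinger step) guarantees that $-1$ is the \emph{lowest} eigenvalue, so no spurious branch arises.

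The main obstacle I anticipate is making the decoupling in step one rigorous and \emph{uniform}: a priori $\delta_\mu(\lambda)$ and $T(\lambda)$ are only known to lie on the curve, so one must show that along any such sequence the ratio $t=\delta_\mu/T$ stays in a regime where the finite-part extraction is valid and the error terms are genuinely $o(1)$ after division by the divergent coefficient $m_\mu\sim\ln(1/T)\to\infty$. In particular one needs: (a) control of $\kappa^{\rm i}(t)$ and its derivative for all $t\ge0$ (it is bounded, with $\kappa^{\rm i}(0)=0$, which also forces $T\to0$ as $\lambda\to0$ since $\varrho(\lambda)\to0$), and (b) a uniform bound on the remainder kernel $K^0_{T,\delta_\mu}(p)^{-1}$ minus its singular part, as an operator $|V|^{1/2}(\cdot)V^{1/2}$, that does not blow up in $\delta_\mu$ — this uses $V\in L^{3/2}\cap L^1$ exactly as the corresponding estimate in \cite{FHNS}. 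Once these uniform estimates are in place, the remaining algebra is routine and the theorem follows by taking $\lambda\to0$ in the identity above.
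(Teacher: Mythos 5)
Your proposal takes essentially the same route as the paper's proof: the Birman--Schwinger reduction with the monotonicity argument ensuring $-1$ is the lowest eigenvalue, the splitting of $1/K^0_{T,\delta_\mu}$ into the singular rank-one-type piece $m(T,\delta_\mu)\,\F\F^*$ plus a uniformly Hilbert--Schmidt remainder, the identification of the leading and subleading operators $\V_{\bar\mu}$, $\W_{\bar\mu}$ so that $m(T,\delta_\mu)\to -1/\infspec(\lambda\sqrt{\bar\mu}\,\V_{\bar\mu}-\lambda^2\W_{\bar\mu})$, and finally the asymptotic evaluation of the scalar $m$, which yields $\ln(\bar\mu/T)-\kappa^{\rm i}(\delta_\mu/T)$ plus the constant $\gamma-2+\ln(8/\pi)$ exactly as in the paper's Lemma on $m(T,\delta_\mu)$. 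The only caveats are bookkeeping: with the paper's normalization $m=\frac{1}{\sqrt{\bar\mu}}\bigl(\ln\frac{\bar\mu}{T}+\gamma-2+\ln\frac{8}{\pi}-\kappa^{\rm i}+o(1)\bigr)$ and $\infspec(\lambda\sqrt{\bar\mu}\,\V_{\bar\mu}-\lambda^2\W_{\bar\mu})=\frac{2\bar\mu}{\pi}\varrho(\lambda)$ the factor $\frac{\pi}{2\sqrt{\bar\mu}\,\varrho(\lambda)}$ in \eqref{7-1} comes out correctly, whereas your stated prefactor $\sqrt{\bar\mu}/\pi$ combined with $m_\mu=-1/\varrho(\lambda)$ would be off by a factor $2$, and in your heuristic tanh-difference formula the imbalanced-minus-balanced finite part equals $-\kappa^{\rm i}(t)$ (with shift $\delta_\mu/2T$ if you scale $x=\xi/2T$), consistent with the sign you use in the final identification.
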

Note that $\kappa^{\rm i}(t) = \ln(t)+\gamma - \ln(\pi/2) + o(1)$ for large $t$, while $\kappa^{\rm i}(0)=0$.
We can rewrite Eq.~(\ref{7-1}) in the form
\begin{align}\nonumber
\kappa^{\rm i} \left(\frac{\delta_\mu}{{T}}\right) 
=\ln{\left( \frac{8 \bar\mu e^{\gamma -2}e^{\frac{\pi}{2
\sqrt{{\bar\mu}}\varrho(\lambda)}
}}{\pi T}\right)} + o(1).
 \end{align}
 If $T_c$ denotes the critical temperature at $\delta_\mu = 0$,
 defined in Subsect.~\ref{sec:bal}, this can also be written as
\begin{equation}
\kappa^{\rm i} \left(\frac{\delta_\mu}{T}
\right)=\ln \left(\frac{T_c}{T} \right) + o(1)\,,
\end{equation}
which agrees with (\ref{claim:tc}) and is plotted in Figure
\ref{fig1}. The region denoted by I defines the region where the
system is guaranteed to be in a superfluid phase. The asymptotic
behavior of $\kappa^{\rm i}$ implies that the curve hits the vertical
axis at $1$ and the horizontal axis at $\pi e^{-\gamma}/2\approx
0.88$.

\begin{proof}Let
\begin{equation}\label{m0}
 m(T,\delta_\mu) =  \frac 1{4\pi \bar\mu}
\int_{\R^3} \left( \frac {1}{K^{0}_{ T, \delta_\mu}(p)}
  - \frac 1{p^2}\right) dp \,,
\end{equation}
and let $M(T,\delta_\mu)$ be the operator 
\begin{equation}\label{bigM}
M(T,\delta_\mu) = \frac{1}{K^{0}_{ T, \delta_\mu}}
-m(T,\delta_\mu) \F \F^*\,,
\end{equation}
with $\F$ defined in the paragraph after Eq.~(\ref{e-1}).  Following
\cite[Lemma 2]{FHNS} it is not difficult to see that
$V^{1/2}M(T,\delta_\mu)|V|^{1/2}$ is bounded in Hilbert-Schmidt
norm uniformly in $(T,\delta_\mu)$.

Suppose $\psi $ is an eigenstate of the
operator $K^{0}_{ T, \delta_\mu} + \lambda V$ corresponding to the lowest eigenvalue $0$.
By the
Birman-Schwinger principle, $\phi = V^{\frac12}\psi$ satisfies 
$$ -\phi = B_{T,\delta_\mu} \phi\,, $$
with $B_{T,\delta_\mu}$ defined in (\ref{def:B}).  Moreover, $-1$ is
the lowest eigenvalue of $B_{T,\delta_\mu}$, as argued above.

Since $V^{1/2}M(T,\delta_\mu)|V|^{1/2}$ is bounded uniformly in
$(T,\delta_\mu)$, the operator $1+\lambda V^{1/2} M(T,\delta_\mu) |V|^{1/2}$ is
invertible for $\lambda$ small enough.  Therefore
\begin{align}\label{nbs1}
1+ \lambda |V|^{\frac12}\frac{1}{{K^{0}_{ T, \delta_\mu}}}V^{\frac12}&=
1+\lambda |V|^{\frac12} \left(m(T,\delta_\mu) \F \F^* +
M(T,\delta_\mu)\right)V^{\frac12} \\ \nonumber &=D\left(1+ \lambda
m(T,\delta_\mu)D^{-1} V^{\frac12} \F^* \F |V|^{\frac12}\right),
\end{align}
where $D=1+\lambda V^{\frac12} M(T,\delta_\mu) |V|^{\frac12}$. The
fact that $-1$ is an eigenvalue of (\ref{def:B}) yields
\begin{equation}\label{nbs2}
D \left(1+
\lambda m(T,\delta_\mu)\frac{1}{1+\lambda V^{\frac12}
M(T,\delta_\mu) |V|^{\frac12}} V^{\frac12} \F^* \F
|V|^{\frac12}\right)\phi=0.
\end{equation}
Since $D$ is
invertible, this is equivalent to the operator
\begin{equation}\label{nbs4}
\lambda m(T,\delta_\mu) \F |V|^{1/2} \frac{1}{1+\lambda V^{1/2}
M(T,\delta_\mu) |V|^{1/2}} V^{1/2} \F^*
\end{equation}
having an
eigenvalue $-1$ (using that $AB$ and $BA$ are isospectral if they
are compact). Note that the latter operator is acting on
$L^2(\Omega_{\bar\mu})$.

By expanding $(1+\lambda V^{\frac12} M(T,\delta_\mu)
|V|^{\frac12})^{-1}$ in a Neumann series we obtain the following
implicit characterization of the corresponding temperature:
\begin{equation}\label{nbs5}
\lambda m(T,\delta_\mu) \infspec (\F \left(V - \lambda V
M(T,\delta_\mu) V + o(\lambda^2) \right) \F^* )= - 1.
\end{equation}
Note that $\F V \F^* = \sqrt{\bar\mu} \,\V_{\bar\mu}$, which
was defined in (\ref{defvm}). It follows that to leading order
\begin{equation}\label{nbs6}
\lim_{\lambda \to 0} \lambda m(T,\delta_\mu) = - \frac{1}{\inf {\rm
spec}  \F V \F^*} = - \frac 1 {\sqrt{\bar\mu} \, e_{\bar\mu}} .
\end{equation}
To derive the second order correction we employ $\W _{\bar\mu}$,
which was defined in (\ref{defW}). It follows from (\ref{nbs5}), first
order perturbation theory, the fact that $\F V \F^*$ is compact and
that $\infspec \F V \F^*<0$ by assumption, that
\begin{equation}\label{deno}
 m(T,\delta_\mu) = \frac {-1}{ \lambda \langle u| \F V \F^*| u\rangle -
    \lambda ^2 \langle u| \F V M(T,\delta_\mu) V \F^*| u\rangle + O(\lambda^3)}\,,
\end{equation}
where $u$ is the normalized eigenfunction corresponding to the
lowest eigenvalue of $\F V \F^*$.  If $u$ is degenerate, we choose
the $u$ that minimizes the $\lambda^2$ term in the denominator of
(\ref{deno}) among all such eigenfunctions. Eq.~(\ref{deno})
represents an implicit equation for $T_{\delta_\mu}^{i}$. Since $\F
V M(T,\delta_\mu) V\F^*$ is uniformly bounded and $T \to 0$ as
$\lambda \to 0$, we have to evaluate the limit of $\langle u| \F V
M(T,\delta_\mu) V \F^*| u\rangle$ as $(\delta_\mu,T)  \to 0$. For
this purpose, let $\varphi = V \F^* u$. Then
\begin{align}\nonumber
  &\langle u| \F V M(T,\delta_\mu) V \F^* |u\rangle \\ \label{comeq}
& = \int_{\R^3} \frac 1{K^{0}_{ T, \delta_\mu}(p)}
  |\hat\varphi(p)|^2 \, dp -m(T,\delta_\mu) \int_{\Omega_{\bar\mu}}
  |\hat\varphi(p)|^2 \, d\omega(p) \\ \nonumber & = \int_{\R^3}
  \left(\frac 1{K^{0}_{ T, \delta_\mu}(p)} \left[ |\hat\varphi(p)|^2
      -|\hat\varphi(\sqrt{\bar\mu} p/|p|)|^2 \right] + \frac 1{p^2}
    |\hat\varphi(\sqrt{\bar\mu} p/|p|)|^2 \right) dp\,.
\end{align}
Recall the definition of $K^{0}_{T,\delta_\mu}(p)$ in (\ref{def:kD}).
For a fixed $\delta_\mu/{T}$, it converges to $|p^2-\bar\mu|$ as
$T\to 0$.  Using the fact that the spherical average of
$|\hat\varphi(p)|^2$ is Lipschitz continuous (see \cite{HS}), we can
interchange the limit and the radial integral over $|p|$, and hence
obtain
\begin{equation}\label{deno2}
  \lim_{T\to 0} \langle u| \F V M(T,\delta_\mu) V \F^*| u\rangle  = \langle u| \W_{\bar\mu} |u \rangle\,.
\end{equation}
We have thus shown that  for $(\delta_\mu,T) \in T^{\rm i}_{\delta_\mu}$
\begin{equation}\label{deno3}
  \lim_{\lambda\to 0} \left(m(T,\delta_\mu) + \frac 1{\inf {\rm spec} \left
        (\lambda \sqrt {\bar\mu}\, V_{\bar\mu} - \lambda^2 \W_{\bar\mu}\right)} \right) =
  0\,.
\end{equation}

It remains to compute $m(T,\delta_\mu)$.

\begin{Lemma}\label{calcm}
  As $(T, \delta_\mu) \to (0,0)$,
\begin{eqnarray}\nonumber
    m(T,\delta_\mu) &=& \frac 1{4\pi \bar\mu} \int_{\R^3}\left( \frac 1
      {K^{0}_{ T, \delta_\mu}(p) } -\frac 1{p^2}\right) dp  \\ \label{mo} &=& \frac 1{\sqrt{\bar\mu}}
\left( \ln \frac {\bar\mu}{T} + \gamma - 2 +\ln \frac 8\pi
-\kappa^{\rm i}(\delta_\mu/T) + o(1)\right).
\end{eqnarray}
\end{Lemma}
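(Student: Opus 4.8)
The plan is to reduce the three‑dimensional integral in \eqref{m0} to a radial one and then to isolate the dependence on $\delta_\mu/T$. Writing $\xi=p^2-\bar\mu$ and integrating out the angles,
\be
m(T,\delta_\mu)=\frac{1}{2\bar\mu}\int_{-\bar\mu}^{\infty}\left(\frac{1}{K^{0}_{T,\delta_\mu}(p)}-\frac{1}{\bar\mu+\xi}\right)\sqrt{\bar\mu+\xi}\,d\xi,
\ee
where, by \eqref{tempoperator} and the identity $\tanh a=1-2/(e^{2a}+1)$, the first term depends on $\xi$ alone through
$$
\frac{1}{K^{0}_{T,\delta_\mu}(p)}=\frac{1}{\xi}\left(1-\frac{1}{e^{(\xi+\delta_\mu)/T}+1}-\frac{1}{e^{(\xi-\delta_\mu)/T}+1}\right),
$$
an even function of $\xi$ (the parenthetical factor is odd) that is bounded, not singular, at $\xi=0$.

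Next I would split $m(T,\delta_\mu)=m(T,0)+\bigl(m(T,\delta_\mu)-m(T,0)\bigr)$. The balanced term is exactly the quantity evaluated in \cite{FHNS,HS}: as $T\to0$, $m(T,0)=\frac{1}{\sqrt{\bar\mu}}\bigl(\ln(\bar\mu/T)+\gamma-2+\ln(8/\pi)+o(1)\bigr)$, which accounts for every term in \eqref{mo} except $-\kappa^{\rm i}(\delta_\mu/T)$. (This asymptotics can also be re‑derived directly by cutting the $\xi$‑integral at a fixed scale; the $\ln(\bar\mu/T)$ then comes from cutting off the divergence of $\int\sqrt{\bar\mu}\,d\xi/\xi$ at the scale $\xi\sim T$, with the constants combining to $\gamma-2+\ln(8/\pi)$.) It thus remains to show $m(T,\delta_\mu)-m(T,0)=-\kappa^{\rm i}(\delta_\mu/T)/\sqrt{\bar\mu}+o(1)$.

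In this difference the subtraction $1/(\bar\mu+\xi)$ drops out, and $1/K^{0}_{T,\delta_\mu}-1/K^{0}_{T,0}=\Phi_{\delta_\mu/T}(\xi/T)/\xi$, where $\Phi_t(x):=2/(e^{x}+1)-1/(e^{x+t}+1)-1/(e^{x-t}+1)=-\tanh(x/2)+\tfrac12\tanh\tfrac{x+t}{2}+\tfrac12\tanh\tfrac{x-t}{2}$ is odd, vanishes at $x=0$, and decays like $e^{-(|x|-t)}$ once $|x|\gtrsim t$; hence the $\R^3$‑integral of the difference converges absolutely. Substituting $\xi=Tx$, using the evenness of $\Phi_t(x)/x$, and replacing $\sqrt{\bar\mu\pm Tx}$ by $\sqrt{\bar\mu}$ in the limit yields
$$
m(T,\delta_\mu)-m(T,0)\;\longrightarrow\;\frac{1}{\sqrt{\bar\mu}}\int_{0}^{\infty}\frac{\Phi_t(x)}{x}\,dx,\qquad t=\frac{\delta_\mu}{T}.
$$
For $\delta_\mu/T$ bounded this is plain dominated convergence; for $\delta_\mu/T\to\infty$ one also uses that $\bar\mu/T$ dwarfs $t$ (since $\delta_\mu\to0$), so the contribution of $|\xi|\ge\bar\mu/2$ is exponentially small and the error from $\sqrt{\bar\mu+Tx}-\sqrt{\bar\mu}$ on $|\xi|\le\bar\mu/2$ is at most a constant times $T\int_{\R}|\Phi_t(x)|\,dx=O(\delta_\mu)\to0$, because $\int_{\R}|\Phi_t|$ grows only linearly in $t$.

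Finally I would identify $\int_0^\infty\Phi_t(x)\,x^{-1}dx$ with $-\kappa^{\rm i}(t)$. Expanding $(e^{y}+1)^{-1}=\sum_{n\ge1}(-1)^{n-1}e^{-ny}$, splitting at $x=t$ (so that the $x-t$ term is expanded via $(e^{x-t}+1)^{-1}=1-(e^{t-x}+1)^{-1}$ on $[0,t]$), and regrouping into Frullani pairs through $\int_0^\infty(e^{-nx}-e^{-(n+1)x})\,x^{-1}dx=\ln\frac{n+1}{n}$ turns the integral into a conditionally convergent alternating sum whose evaluation is a $t$‑dependent Wallis‑type product; for $t=0$ it reduces to the classical one, $\int_0^\infty e^{-x}\tanh(x/2)\,x^{-1}dx=\ln(\pi/2)$, consistent with $\kappa^{\rm i}(0)=0$. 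Carrying this out — equivalently, introducing the factors $1-e^{-x}$ which render each of the two integrals in \eqref{kappai} individually convergent at $x=0$ — reproduces exactly the expression \eqref{kappai}. The main obstacle is precisely this last identification, which is elementary but somewhat lengthy, together with checking that the convergence $m(T,\delta_\mu)-m(T,0)\to-\kappa^{\rm i}(\delta_\mu/T)/\sqrt{\bar\mu}$ is uniform along every sequence $(\delta_\mu,T)\to(0,0)$, in particular in the regime $\delta_\mu/T\to\infty$ where $\kappa^{\rm i}$ itself diverges logarithmically.
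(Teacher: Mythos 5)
Your proposal is correct, and it organizes the computation differently from the paper. The paper evaluates $m(T,\delta_\mu)$ in one pass: after reducing to the radial variable $t=|p^2-\bar\mu|$ it splits the integral into four pieces, computes the three regular pieces by dominated convergence (producing the constants $\ln(1+\sqrt2)$, $\sqrt2$, $\ln 4$, $-2$), and extracts from the singular piece $\int_0^{\bar\mu}\Upsilon_0(t)\,t^{-1}dt$ both the $\ln(\bar\mu/T)+\gamma$ divergence and the function $\kappa^{\rm i}$, via the same insertion of the factor $1-e^{-x}$ that defines \eqref{kappai}. You instead write $m(T,\delta_\mu)=m(T,0)+\bigl(m(T,\delta_\mu)-m(T,0)\bigr)$, quote the balanced-case asymptotics of $m(T,0)$ from \cite{FHNS,HS}, and show that the difference is, after the scaling $\xi=Tx$, a universal integral $\sqrt{\bar\mu}^{-1}\int_0^\infty\Phi_{t}(x)\,x^{-1}dx$ with $t=\delta_\mu/T$, which indeed equals $-\kappa^{\rm i}(t)/\sqrt{\bar\mu}$: your oddness/evenness observations and the error estimates (exponential smallness of $|\xi|\geq\bar\mu/2$ because $\delta_\mu\ll\bar\mu$, and $T\int_\R|\Phi_t|=O(\delta_\mu)$ for the replacement $\sqrt{\bar\mu+Tx}\to\sqrt{\bar\mu}$) are sound and correctly handle the regime $\delta_\mu/T\to\infty$, which is the point where uniformity could fail. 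The one step you leave schematic is the identification $\int_0^\infty\Phi_t(x)\,x^{-1}dx=-\kappa^{\rm i}(t)$; the series/Frullani regrouping you describe first requires care with conditional convergence, but your ``equivalent'' alternative is the clean way to do it and is precisely the paper's manipulation: writing $\frac{1}{1+e^{x\pm t}}-\frac{e^{-x}}{1+e^{\pm t}}=\frac{1}{1+e^{\pm t}}\,\frac{e^{-x}-1}{1+e^{x\pm t}}\cdot(-1)$, the subtracted $e^{-x}$ terms cancel (since $(1+e^t)^{-1}+(1+e^{-t})^{-1}=1$), the $t=0$ term gives the classical $\int_0^\infty e^{-x}\tanh(x/2)\,x^{-1}dx=\ln(\pi/2)$, and \eqref{kappai} drops out exactly. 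What your route buys is modularity (the $\ln(\bar\mu/T)+\gamma-2+\ln(8/\pi)$ part comes for free from the balanced case, and the $\delta_\mu$-dependence is isolated as a convergent scaling limit); what the paper's route buys is self-containedness and the fact that $\kappa^{\rm i}$ emerges directly in the form \eqref{kappai} without a separate integral identity.
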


 \begin{proof}
  We split the integral into two parts according to whether $p^2\leq \bar\mu$ or 
  $p^2\geq \bar\mu$. By changing variables from $p^2-\bar\mu$ to $-t$ and
  $t$, respectively, we see that $ m(T,\delta_\mu)$ equals 
 \begin{multline}\label{lowerblemma}
    \frac{1}{2\bar\mu}\int_{0}^{\bar\mu}\left( \left({1-\frac{1}{1+e^{\frac{1}{T}(t+\delta_\mu )}}-\frac{1}{1+e^{\frac{1}{T}(t-\delta_\mu)}}}\right)
  \frac{\sqrt{\bar\mu - t}}{t}-\frac{1}{\sqrt{\bar\mu - t}}\right)dt   \\  +
  \frac{1}{2\bar\mu}\int_{0}^{\infty}\left( \left({1-\frac{1}{1+e^{\frac{1}{T}(t+\delta_\mu)}}-\frac{1}{1+e^{\frac{1}{T}(t-\delta_\mu)}}}\right)
  \frac{\sqrt{\bar\mu + t}}{t}-\frac{1}{\sqrt{\bar\mu + t}}\right)dt\,.
\end{multline}
 To simplify the notation, let us introduce the function
 \begin{equation}\label{def:ups}
  \Upsilon_0(t)= \left({1-\frac{1}{1+e^{\frac{1}{T}(t+\delta_\mu)}}-\frac{1}{1+e^{\frac{1}{T}(t-\delta_\mu)}}}\right)\,.
  \end{equation}
 We can rewrite (\ref{lowerblemma}) as
  \begin{align}\nonumber 
  m(T,\delta_\mu) & =   \frac{1}{2\bar\mu}\int_{\bar\mu}^{\infty}\left( \Upsilon_0(t)
  \frac{\sqrt{\bar\mu + t}}{t}-\frac{1}{\sqrt{\bar\mu + t}}\right)dt    \\ \nonumber & \quad +
  \frac{1}{2\bar\mu}\int_{0}^{\bar\mu}  \Upsilon_0(t)\left(
  \frac{\sqrt{\bar\mu + t} +\sqrt{\bar\mu - t} - 2\sqrt{\bar\mu}}{t}\right)dt
    \\  & \quad -
  \frac{1}{2\bar\mu}\int_{0}^{\bar\mu}\left(\frac{1}{\sqrt{\bar\mu + t}} + \frac{1}{\sqrt{\bar\mu - t}}\right)dt   +
  \frac{1}{\sqrt{\bar\mu}}\int_{0}^{\bar\mu}
  \frac{ \Upsilon_0(t)}{t}dt\,. \label{7-11}
  \end{align}
Using the dominated convergence theorem, it is easy to see that
  \begin{align}\nonumber
    &\lim_{T, \delta_\mu \to 0}\int_{\bar\mu}^{\infty}\left( \Upsilon_0(t)
  \frac{\sqrt{\bar\mu + t}}{t}-\frac{1}{\sqrt{\bar\mu + t}}\right)dt \\ \nonumber &=
    \int_{\bar\mu}^{\infty} \left(\frac{\sqrt{\bar\mu+t}}{t} - \frac
      1{\sqrt{\bar\mu+t}}\right) dt = 2 \sqrt{\bar\mu} \ln\left(1+\sqrt
      2\right)\,.
  \end{align}
Moreover,  $$\frac{1}{2\bar\mu}\int_{0}^{\bar\mu}\left(\frac{1}{\sqrt{\bar\mu + t}} + \frac{1}{\sqrt{\bar\mu - t}}\right)dt = \sqrt\frac 2{\bar\mu} \,.$$
Again by the dominated convergence theorem,  the second integral becomes
\begin{eqnarray}\nonumber
  \frac{1}{2\bar\mu}\int_{0}^{\bar\mu}\frac{\sqrt{\bar\mu+t} + \sqrt{\bar\mu -t} -2\sqrt{\bar\mu}}{t}dt   =
  \frac{1}{\sqrt{\bar\mu}}\left( \ln 4-\ln(1+\sqrt{2}) + \sqrt{2}  -2
  \right)
 \end{eqnarray}
 in the
limit $T, \delta_\mu \to 0$. 
 With $c={\delta_\mu}/{T}$, 
 using the fact that $[1+e^{c}]^{-1}+[1+e^{-c}]^{-1}=1$,
 we can rewrite the last term in (\ref{7-11}) as
 $$ \frac{1}{\sqrt{\bar\mu}}\left[ \int_{0}^{\frac{\bar\mu}{T}}\frac{1}{x}\left(\frac{1}{1+e^{c}}-\frac{1}{1+e^{c}e^{x}}\right)dx +
 \int_{0}^{\frac{\bar\mu}{T}}\frac{1}{x}\left(\frac{1}{1+e^{-c}}-\frac{1}{1+e^{-c}e^{x}}\right)dx \right].$$
 Note that
 \begin{eqnarray}\nonumber
 \int_{0}^{\frac{\bar\mu}{T}}\frac{1}{x}\left(\frac{1}{1+e^{c}}-\frac{1}{1+e^{c}e^{x}}\right)dx &
 = &
 \frac{1}{1+e^{c}}\int_{0}^{\frac{\bar\mu}{T}}\frac{1-e^{-x}}{x}dx
 {}\nonumber\\ &&{} \nonumber -
 \frac{1}{1+e^{c}}\int_{0}^{\frac{\bar\mu}{T}}\frac{1-e^{-x}}{1+e^{c}e^{x}}\, \frac {dx}x\,.
\end{eqnarray}
Using integration by parts,
\begin{align}\nonumber
   \int_{0}^{\bar\mu/T}
  \frac {1-e^{-x}}{x} dx &= \ln \frac{\bar\mu}{T} \left(
    1-e^{-{\bar\mu}/T}\right) - \int_{0}^{\bar\mu/T} \ln (x)e^{-x} \, dx\,.
\end{align}
Moreover, since 
 \begin{equation}\label{euler}
   -\int_{0}^{\infty} e^{-x} \ln{(x)} dx = \gamma
   \end{equation}
(Euler's constant), we conclude that
$$\lim_{T, \delta_\mu \to 0}\left(  \int_{0}^{\bar\mu/T}
  \frac {1-e^{-x}}{x} dx - \ln \frac{\bar\mu}{T} \right) =
\gamma\,.$$ The same argument applies with $c$ replaced by $-c$, and
hence
$$ \frac{1}{\sqrt{\bar\mu}}\int_{0}^{\bar\mu}
  \frac{ \Upsilon_0(t)}{t}dt = \frac{1}{\sqrt{\bar\mu}}\left( \gamma + \ln{\frac{\bar\mu}{T}} -\ln \frac 8\pi- \kappa^{\rm i}(c)
+ o(1) \right).$$
Combining all the terms gives the statement of the lemma.
\end{proof}

 Theorem \ref{lowerb} follows
from (\ref{deno3}), (\ref{defbm}), and (\ref{mo}).
\end{proof}

\subsection{Evaluation of $T^{\rm o}_{\delta_\mu}$}

For $x\geq 0$ and $c\geq 0$, consider the function 
$$f(x,c)=\frac{x}{\tanh \frac{x+c}2 + \tanh\frac{x-c}2}\,.$$
For each $c$ there is a $b(c)$ such that $x \mapsto f(x,c)$ attains its minimum at $b(c)$ and  $f$
is monotone increasing for all $x \geq b(c)$.  Recall from Corollary \ref{cor1}
that $b(c)=0$ for $c \leq  \cosh^{-1}(2)$ . We have 
$$
K_{T, \delta_\mu}^{\Delta}= T \, f\left(\sqrt{ (p^2 -
\bar\mu)^2 + |\Delta|^2}/T,{\delta_\mu}/{T}\right)\,.
$$
Therefore, $K_{T, \delta_\mu}^{\Delta}$ is monotone in $\Delta$
whenever $\frac{|p^2 - \bar\mu|}{T} \geq
b\left({\delta_\mu}/{T}\right)$, and we have that
\begin{equation}\label{Kd}
\widetilde K _{T, \delta_\mu} = \inf_{\Delta} K_{T,
\delta_\mu}^{\Delta} = \left\{ \begin{array}{ll} K^{0}_{T,
\delta_\mu} & {\rm for\ } \frac{|p^2 - \bar\mu|}{T} \geq
b \left({\delta_\mu}/{T}\right) \\
2 T f(b(\delta_\mu/T),{\delta_\mu}/{T}) & {\rm for\ } \frac{|p^2 -
\bar\mu|}{T} < b \left({\delta_\mu}/{T}\right)\,.
\end{array}\right.
\end{equation}

\begin{theorem}\label{upperb}
  Let $V \in L^1 (\R^3) \cap L^{3/2}(\R^3)$ be real-valued and
  $\bar\mu > 0$.  Assume that $e_{\bar\mu}$ defined in
  (\ref{e-1}) is strictly negative, and let $\varrho(\lambda)$ be
  defined as in (\ref{defbm}).  Then any sequence of pairs
  $\left(\delta_\mu(\lambda),T(\lambda)\right)$ on the curve
  $T_{\delta_\mu}^{\rm o}(\lambda V)$ satisfies
\begin{equation}\label{7o-1}
\lim_{\lambda \to 0}\left( \ln {\frac{\bar\mu}{T}} +
\frac{\pi}{2 \sqrt{{\bar\mu}}\varrho(\lambda)} -   \kappa^{\rm o} (\delta_\mu/T)\right) = 2 -
\gamma - \ln \frac 8\pi \,,
\end{equation}
where $\gamma\approx 0.5772$ is Euler's constant and 
\begin{align}\label{kappao}
\kappa^{\rm o}(c)&= \frac{1}{1+e^{c}}\int_{b}^{\infty}\frac{1-e^{-x}}{1+e^{x+c}}\, \frac {dx}x+\frac{1}{1+e^{-c}}\int_{b}^{\infty}\frac{1-e^{-x}}{1+e^{x-c}}\, \frac{dx}x
\\ \nonumber & \ \
 +\left( 1-e^{-b}\right)\ln{b} -
\int_{0}^{b} \ln (x)e^{-x} \ dx
- \frac{b}{2f(b,c)} - \ln \frac \pi 2 \,,
\end{align}
with $b= b(c)$ defined above.
\end{theorem}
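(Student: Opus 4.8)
The plan is to repeat the proof of Theorem~\ref{lowerb} almost verbatim, with $K^{0}_{T,\delta_\mu}$ replaced by $\widetilde K_{T,\delta_\mu}$, and then to track how the difference between these two functions --- which is supported on a thin shell around the Fermi sphere $\Omega_{\bar\mu}$ --- modifies the final asymptotics, replacing $\kappa^{\rm i}$ by $\kappa^{\rm o}$. First I would check that $\widetilde K_{T,\delta_\mu}(p)$ is still monotone increasing in $\delta_\mu$: from $\widetilde K_{T,\delta_\mu}(p)=\inf_{E\ge|p^2-\bar\mu|}2E\big(\tanh\frac{E+\delta_\mu}{2T}+\tanh\frac{E-\delta_\mu}{2T}\big)^{-1}$ and the fact that for each fixed $E>0$ the denominator is decreasing in $\delta_\mu\ge0$ (by evenness of $\mathrm{sech}^2$), each member of the family is increasing in $\delta_\mu$, hence so is the infimum. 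Therefore $\widetilde K_{T,\delta_\mu}(-i\nabla)+\lambda V$ is operator-monotone in $\delta_\mu$, its spectrum tends to $0$ as $\delta_\mu\to\infty$, and the Birman--Schwinger principle together with this monotonicity gives, exactly as for $B_{T,\delta_\mu}$, that $(\delta_\mu,T)\in T^{\rm o}_{\delta_\mu}(\lambda V)$ is equivalent to the compact operator $\lambda|V|^{1/2}\widetilde K_{T,\delta_\mu}^{-1}V^{1/2}$ having $-1$ as its \emph{lowest} eigenvalue.

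Next I would carry out the operator-theoretic reduction. Define $m^{\rm o}(T,\delta_\mu)=\frac1{4\pi\bar\mu}\int_{\R^3}\big(\widetilde K_{T,\delta_\mu}(p)^{-1}-p^{-2}\big)\,dp$ and $M^{\rm o}(T,\delta_\mu)=\widetilde K_{T,\delta_\mu}^{-1}-m^{\rm o}(T,\delta_\mu)\F\F^*$, in analogy with \eqref{m0} and \eqref{bigM}. The key point is that $V^{1/2}M^{\rm o}(T,\delta_\mu)|V|^{1/2}$ is again bounded in Hilbert--Schmidt norm uniformly in $(T,\delta_\mu)$: by \eqref{Kd}, $\widetilde K_{T,\delta_\mu}$ agrees with $K^{0}_{T,\delta_\mu}$ outside the shell $\{|p^2-\bar\mu|\le Tb(\delta_\mu/T)\}$, whose radial width is $O(Tb(\delta_\mu/T))$ (which tends to $0$ along the curve, since $b(c)=O(c)$ for large $c$), and on that shell $\widetilde K_{T,\delta_\mu}=2Tf(b(\delta_\mu/T),\delta_\mu/T)$ is bounded below by a positive multiple of $T$; combining this with the Lipschitz regularity of the spherical average of $|\hat\varphi|^2$ near $\Omega_{\bar\mu}$ (as used in \cite{FHNS,HS}) gives the bound. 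From here the Neumann expansion and first-order perturbation theory run word for word as in the proof of Theorem~\ref{lowerb} and produce the condition
\[
\lim_{\lambda\to0}\Big(m^{\rm o}(T,\delta_\mu)+\frac1{\inf{\rm spec}\,\big(\lambda\sqrt{\bar\mu}\,\V_{\bar\mu}-\lambda^2\W_{\bar\mu}\big)}\Big)=0
\]
along $T^{\rm o}_{\delta_\mu}$, with exactly the \emph{same} operators $\V_{\bar\mu}$ and $\W_{\bar\mu}$ as before. Indeed, in $\langle u|\F V M^{\rm o}(T,\delta_\mu)V\F^*|u\rangle$ the shell contribution splits as $\int\big(\widetilde K_{T,\delta_\mu}^{-1}-(K^{0}_{T,\delta_\mu})^{-1}\big)|\hat\varphi|^2\,dp-(m^{\rm o}-m)\int_{\Omega_{\bar\mu}}|\hat\varphi|^2\,d\omega$, whose leading (Fermi-surface-value) parts cancel, leaving an $O(Tb(\delta_\mu/T))$ error by the Lipschitz bound, so the analogue of \eqref{deno2} again gives $\langle u|\W_{\bar\mu}|u\rangle$.

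It then remains to redo Lemma~\ref{calcm} for $m^{\rm o}$. I would decompose $\int(\widetilde K_{T,\delta_\mu}^{-1}-p^{-2})$ exactly as in \eqref{7-11}. The first three pieces there are unaffected by the replacement $K^{0}_{T,\delta_\mu}\to\widetilde K_{T,\delta_\mu}$: the first ranges over $t\ge\bar\mu$, hence, for small $T$, outside the shell; in the second the weight $(\sqrt{\bar\mu+t}+\sqrt{\bar\mu-t}-2\sqrt{\bar\mu})/t=O(t)$ vanishes at $t=0$, so the shell contributes $O((Tb)^2)\to0$; and the third contains no factor involving $\widetilde K_{T,\delta_\mu}$. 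Thus those three again sum to $\frac1{\sqrt{\bar\mu}}(\ln4-2+o(1))$. Only the last piece $\frac1{\sqrt{\bar\mu}}\int_0^{\bar\mu}\Upsilon^{\rm o}(t)\,t^{-1}\,dt$ changes, where $\Upsilon^{\rm o}(t)=t/\widetilde K_{T,\delta_\mu}$ equals $\Upsilon_0(t)$ for $|t|\ge Tb$ and $t/(2Tf(b,c))$ for $|t|<Tb$, with $c=\delta_\mu/T$ and $b=b(c)$. Substituting $x=t/T$ splits this integral into $\int_0^{b}\frac{dx}{2f(b,c)}=\frac{b}{2f(b,c)}$ plus $\int_b^{\bar\mu/T}\Upsilon_0(Tx)\,x^{-1}\,dx$, and the latter is handled by the same identity $(1+e^{\pm c})^{-1}+(1+e^{\mp c})^{-1}=1$ and integration by parts as in Lemma~\ref{calcm}; the only difference is that the lower limit of integration is now $b$ rather than $0$, which produces the extra boundary terms $(1-e^{-b})\ln b-\int_0^b\ln(x)e^{-x}\,dx$ in place of a contribution cleanly absorbed near $0$. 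Collecting everything gives
\[
m^{\rm o}(T,\delta_\mu)=\frac1{\sqrt{\bar\mu}}\Big(\ln\frac{\bar\mu}{T}+\gamma-2+\ln\frac8\pi-\kappa^{\rm o}(\delta_\mu/T)+o(1)\Big),
\]
with $\kappa^{\rm o}$ exactly as in \eqref{kappao}, and feeding this into the displayed limit together with \eqref{defbm} and \eqref{e-1} yields \eqref{7o-1}.

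I expect the main obstacle to be the second paragraph: verifying that the non-smooth, shell-type modification $K^{0}_{T,\delta_\mu}\to\widetilde K_{T,\delta_\mu}$ spoils neither the uniform Hilbert--Schmidt bound on $V^{1/2}M^{\rm o}|V|^{1/2}$ nor the limit identifying $\W_{\bar\mu}$. The cancellation of the Fermi-surface parts noted above, together with the facts that the shell has radial width $O(Tb(\delta_\mu/T))\to0$ along the curve and that $\widetilde K_{T,\delta_\mu}$ is bounded below there by a positive multiple of $T$, is what makes this go through, and some care is needed with uniformity in $\delta_\mu/T$ (the regime $\delta_\mu/T\to\infty$ may be treated separately).
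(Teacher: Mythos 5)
Your proposal is correct and follows essentially the same route as the paper: the Birman--Schwinger reduction is repeated verbatim with $K^0_{T,\delta_\mu}$ replaced by $\widetilde K_{T,\delta_\mu}$ (yielding the analogue of \eqref{deno3} with the same operators $\V_{\bar\mu}$, $\W_{\bar\mu}$), and the asymptotics of the modified integral $\widetilde m(T,\delta_\mu)$ are obtained by splitting off the shell $\{|p^2-\bar\mu|<Tb\}$, where $\widetilde K_{T,\delta_\mu}=2Tf(b,c)$ contributes $b/(2f(b,c))$, and shifting the lower limit of the remaining integral to $b$, which produces exactly the boundary terms in \eqref{kappao}. Your extra checks (monotonicity of $\widetilde K_{T,\delta_\mu}$ in $\delta_\mu$, the uniform Hilbert--Schmidt bound and the identification of $\W_{\bar\mu}$ despite the shell modification) are details the paper leaves implicit, and your minor bookkeeping difference in where the $1/p^2$ subtraction is placed is equivalent up to $o(1)$.
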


Recall that $b(c)=0$ for $c\leq \cosh^{-1}(2)$, and hence $\kappa^{\rm
  o}(c) = \kappa^{\rm i}(c)$ in this case. For larger $c$ they differ,
however, as Figure~\ref{fig1} shows.

\begin{proof}
The proof works analogously to $T^{\rm i}_{\delta_\mu}$. The only
difference is that $m(T,\delta_\mu)$ now has to be replaced by 
\begin{equation}\label{m0tilde}
 \widetilde m(T,\delta_\mu) = \frac 1{4\pi \mu}
\int_{\R^3} \left( \frac {1}{\widetilde K _{T, \delta_\mu}(p)}
  - \frac 1{p^2}\right) dp\,,
\end{equation}
and the operator $M(T,\delta_\mu)$ by 
\begin{equation}\label{bigm}
\widetilde M(T,\delta_\mu) = \frac{1}{\widetilde K _{T, \delta_\mu}(p)} -
\widetilde m(T,\delta_\mu) \F \F^*\,.
\end{equation}
The result is that
\begin{equation}\label{denof2}
  \lim_{\lambda\to 0} \left( \widetilde m(T,\delta_\mu)  + \frac 1{\inf {\rm spec }\left
        (\lambda \sqrt \mu\, \V_{\bar\mu} - \lambda^2 \W_{\bar\mu}\right)} \right) =
  0\,
\end{equation}
for $(\delta_\mu,T) \in T_{\delta_\mu}^{\rm o}$. What remains is to calculate $\widetilde m(T,\delta_\mu)$.

\begin{Lemma}\label{calcm2}
  As $(T, \delta_\mu) \to (0,0)$,
\begin{equation}\label{lemresult}
    \widetilde m (T,{\delta_\mu})= \frac 1{\sqrt{\bar\mu}}
\left( \ln \frac {\bar\mu} T + \gamma - 2 +\ln \frac 8\pi
-\kappa^{\rm o}(\delta_\mu/T) + o(1)\right)\,.
\end{equation}
\end{Lemma}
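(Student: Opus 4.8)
The goal is to compute the asymptotics of $\widetilde m(T,\delta_\mu)$ as $(T,\delta_\mu)\to(0,0)$, and the strategy is to mimic the proof of Lemma~\ref{calcm} for $m(T,\delta_\mu)$, tracking only the difference between $1/\widetilde K_{T,\delta_\mu}$ and $1/K^0_{T,\delta_\mu}$. By the explicit formula \eqref{Kd}, this difference is supported on the shell $|p^2-\bar\mu| < T b(\delta_\mu/T)$, which is nonempty only when $\cosh(\delta_\mu/T)>2$. Thus I would write
\begin{equation*}
\widetilde m(T,\delta_\mu) = m(T,\delta_\mu) + \frac{1}{4\pi\bar\mu}\int_{|p^2-\bar\mu|<Tb}\left(\frac{1}{\widetilde K_{T,\delta_\mu}(p)} - \frac{1}{K^0_{T,\delta_\mu}(p)}\right)dp,
\end{equation*}
and invoke Lemma~\ref{calcm} for the first term. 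It then remains to evaluate the correction integral in the limit and show it produces exactly $\kappa^{\rm i}(c)-\kappa^{\rm o}(c)$ (with $c=\delta_\mu/T$), i.e. the difference of \eqref{kappai} and \eqref{kappao}.

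**Key steps.** First I would change variables from $p^2-\bar\mu$ to $\pm t$ as in the proof of Lemma~\ref{calcm}, so that the correction integral becomes (after the Jacobian $\tfrac{\sqrt{\bar\mu\pm t}}{t}\approx \tfrac{1}{\sqrt{\bar\mu}}$, since $t=O(T)\to 0$)
\begin{equation*}
\frac{1}{\sqrt{\bar\mu}}\int_0^{Tb}\left(\frac{1}{2Tf(b,c)} - \frac{\Upsilon_0(t)}{t}\right)dt + o(1),
\end{equation*}
using that $\widetilde K_{T,\delta_\mu}=2Tf(b(c),c)$ on this shell and $K^0_{T,\delta_\mu}(p)=t/\Upsilon_0(t)$ with $\Upsilon_0$ from \eqref{def:ups}. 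Rescaling $t=Tx$ turns the first piece into $\tfrac{b}{2f(b,c)}\cdot\tfrac{1}{\sqrt{\bar\mu}}$ and the second into $\tfrac{1}{\sqrt{\bar\mu}}\int_0^b \widetilde\Upsilon_0(x)/x\,dx$ where $\widetilde\Upsilon_0(x)=1-(1+e^{x+c})^{-1}-(1+e^{x-c})^{-1}$. Second, I would recognize that $\kappa^{\rm i}$ and $\kappa^{\rm o}$ differ precisely by replacing the lower limits $0$ by $b$ in the two integrals and by the extra terms $(1-e^{-b})\ln b - \int_0^b\ln(x)e^{-x}dx - b/(2f(b,c))$ in \eqref{kappao}. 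To match these I would use the same integration-by-parts identity from the proof of Lemma~\ref{calcm}: $\int_0^b \tfrac{1-e^{-x}}{x}dx = (1-e^{-b})\ln b - \int_0^b\ln(x)e^{-x}dx$, together with the split $\widetilde\Upsilon_0(x)/x = \tfrac{1}{1+e^c}\tfrac{1-e^{-x}}{x} + \tfrac{1}{1+e^{-c}}\tfrac{1-e^{-x}}{x} - \tfrac{1}{1+e^c}\tfrac{1-e^{-x}}{1+e^{c}e^x} - \tfrac{1}{1+e^{-c}}\tfrac{1-e^{-x}}{1+e^{-c}e^x}$ (using $(1+e^c)^{-1}+(1+e^{-c})^{-1}=1$), exactly as was done for $\int_0^{\bar\mu}\Upsilon_0(t)/t\,dt$. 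Comparing term by term with \eqref{kappai} should yield $\widetilde m - m = \tfrac{1}{\sqrt{\bar\mu}}(\kappa^{\rm i}(c)-\kappa^{\rm o}(c)) + o(1)$, and adding Lemma~\ref{calcm} gives \eqref{lemresult}.

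**Main obstacle.** The calculations are routine once set up, and the only delicate point is the interchange of the $T\to 0$ limit with the radial integral on the thin shell: the integration region itself shrinks like $T$, so one cannot naively apply dominated convergence to the integral in its original variable. The clean fix is the rescaling $t=Tx$, after which the region $\{0<x<b\}$ is fixed and the integrand $\widetilde\Upsilon_0(x)/x$ is bounded and $T$-independent; the Jacobian corrections $\sqrt{\bar\mu\pm Tx}/(\bar\mu)$ converge uniformly to $1/\sqrt{\bar\mu}$ on $[0,b]$, so dominated convergence applies after rescaling. One must also track the boundary of the shell carefully: at $|p^2-\bar\mu| = Tb$ the two expressions $1/\widetilde K$ and $1/K^0$ agree (since $f$ attains its minimum at $b$ and $\widetilde K = K^0$ for $|p^2-\bar\mu|\geq Tb$), so there is no boundary mismatch, but checking that $b(c)$ depends continuously on $c$ and stays bounded along the sequence $\delta_\mu(\lambda)/T(\lambda)$ is needed to make the $o(1)$ uniform. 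Since we are on the curve $T^{\rm o}_{\delta_\mu}$ and $(\delta_\mu,T)\to 0$, one can argue $\delta_\mu/T$ stays bounded (otherwise $\kappa^{\rm o}$ and hence $\widetilde m$ would blow up, contradicting \eqref{denof2} with $\varrho(\lambda)\to 0^-$), which closes the argument; the theorem then follows from \eqref{denof2}, \eqref{defbm}, and \eqref{lemresult} exactly as Theorem~\ref{lowerb} followed from its ingredients.
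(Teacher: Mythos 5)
Your proposal is correct and follows essentially the same route as the paper: the paper likewise splits the integral according to \eqref{Kd}, evaluates the shell $|p^2-\bar\mu|<Tb$ (yielding the $b/(2\sqrt{\bar\mu}f(b,c))$ term) and repeats the Lemma~\ref{calcm} computation with lower cutoff $Tb$, using the same rescaling, the identity $(1+e^{c})^{-1}+(1+e^{-c})^{-1}=1$, and integration by parts. Your only (cosmetic) difference is to organize this as computing $\widetilde m-m$ on the shell and adding Lemma~\ref{calcm}, which reproduces exactly the terms by which $\kappa^{\rm o}$ differs from $\kappa^{\rm i}$.
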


 \begin{proof}
   For simplicity we will abbreviate $b \left({\delta_\mu}/{T}\right)$
   by $b$ throughout the proof. First note that from (\ref{Kd}) it
   follows that $\widetilde m (T,{\delta_\mu})$ equals
\begin{equation}\label{k-Delta}
   \frac 1{4\pi \bar\mu}\left[ \int_{\frac{|p^2 -
\bar\mu|}{T} \geq b}\left( \frac 1 {K^{0}_{T, \delta_\mu}(p) }
-\frac 1{p^2}\right) dp  + \int_{\frac{|p^2 - \bar\mu|}{T} <
b}\left( \frac 1
      {f(b,{\delta_\mu}/{T})} -\frac 1{p^2}\right)dp\right] \,.
\end{equation}
For given $b>0$, 
\begin{equation}\nonumber
\lim _{T, \delta_\mu \to 0} \frac 1{4\pi \bar\mu}
\int_{\frac{|p^2 - \bar\mu|}{T} < b}\left( \frac 1
      {2 T f(b,{\delta_\mu}/{T})} -\frac 1{p^2}\right) dp = \frac{b}{2\sqrt{\bar\mu}f(b,\delta_\mu/T)}\,,
\end{equation}
and the result obviously also holds for $b=0$, where both sides are zero. 

As in the proof of Lemma~\ref{calcm}, we split the first integral in (\ref{k-Delta})  into two parts
according to $p^2\leq \bar\mu$
  and $p^2\geq \bar\mu$, and change variables from $p^2-\bar\mu$ to $-t$ and
  $t$, respectively. Introducing again the function $\Upsilon_0$ in (\ref{def:ups}), the integral equals 
  \begin{align}\nonumber
 &  \frac{1}{2\bar\mu}\int_{\bar\mu}^{\infty}\left(  \Upsilon_0(t)
  \frac{\sqrt{\bar\mu + t}}{t}-\frac{1}{\sqrt{\bar\mu + t}}\right)dt \\ \nonumber & +
  \frac{1}{2\bar\mu}\int_{T b}^{\bar\mu}  \Upsilon_0(t)
  \left(\frac{\sqrt{\bar\mu + t} +\sqrt{\bar\mu - t} - 2\sqrt{\bar\mu}}{t}\right)dt
    \\ & 
   -
  \frac{1}{2\bar\mu}\int_{T b}^{\bar\mu}\left(\frac{1}{\sqrt{\bar\mu + t}} + \frac{1}{\sqrt{\bar\mu - t}}\right)dt + \frac{1}{\sqrt{\bar\mu}}\int_{T b}^{\bar\mu}  \frac{ \Upsilon_0(t)}t dt\,. \label{ccdd}
  \end{align}
The first integral converges to 
$\bar\mu^{-1/2} \ln\left(1+\sqrt
      2\right)$ as ${T, \delta_\mu \to 0}$.
  The second integral gives $\frac{1}{\sqrt{\bar\mu}}\left( -\ln(1+\sqrt{2}) + \sqrt{2} +\ln 4 -2
  \right)$ and the  third integral is $\sqrt{2/\bar\mu}$ in this limit.

To evaluate the last term in (\ref{ccdd}), we proceed as in Lemma~\ref{calcm}, with the obvious modifications. The result is that 
$$ \frac{1}{\sqrt{\bar\mu}}\int_{Tb}^{\bar\mu}
  \frac{ \Upsilon_0(t)}{t}dt = \frac{1}{\sqrt{\bar\mu}}\left( \gamma + \ln{\frac{\bar\mu}{T}} -\ln \frac 8\pi- \kappa^{\rm o}(c)
+ o(1) \right)\,.$$
 Collecting all the terms proves the lemma.
 \end{proof}
This concludes the proof of Theorem~\ref{upperb}. 
\end{proof}

\subsection{Evaluation of $T_{\delta_\mu}^{\rm g}$ and a more detailed phase diagram}

Our final goal is derive a sharper upper bound on the critical
temperature which separates the phases where the minimizer of the BCS
functional has a vanishing or a non-vanishing $\alpha$. We are able to
do this under additional assumptions on the interaction potential $V$,
which, in particular, imply that the BCS minimizer $\alpha$ is the
ground state of the operator $ K^\Delta_{T,\delta_\mu} + \lambda V$.

Recall the definition of $T^{\rm g}_{\delta_\mu}$ in (\ref{tcdef}). 

\begin{theorem}\label{tctheorem}
  Let $V$ be a radial function in $L^{3/2}(\R^3)\cap L^{1}(\R^3)$,
  with $\hat V \leq 0$ and $\hat V(0) < 0$.  Assume that
  $e_\mu=\infspec \V_{\bar\mu}<0$, and let $\varrho(\lambda)$ be
  defined as in (\ref{ddeff}).  Then any sequence of pairs
  $(\delta_\mu(\lambda),T(\lambda))$ on the curve $T^{\rm
    g}_{\delta_\mu}(\lambda V)$ satisfies
\begin{equation}
\lim_{\lambda \to 0}\left( \ln
{\frac{\bar\mu}{T}} + \frac{\pi}{2
\sqrt{{\bar\mu}}\varrho(\lambda)} -
\kappa^{\rm g}(\delta_\mu/T)\right) = 2 - \gamma - \ln (8/\pi) 
\end{equation}
 where $\kappa^{\rm g}(t) = \inf_{d> 0} \zeta(t,d)$, with
 \begin{align}\label{ctc5}
 \zeta(t,d) =&- \int_0^{ d} e^{-x}\ln{(x/d)}\,dx +\int_{ d}^{\infty} e^{-x}\ln{\left(1+\sqrt{1-\left(d/x\right)^2}\right)}dx \\  \nonumber & + \frac{1}{1+e^t}  \int_{d}^{\infty} \frac{1-e^{-x}}{1+e^{x+t}}\,\frac{dx}{\sqrt{x^2 -d^2}} \\ \nonumber &+ \ \frac{1}{1+e^{-t}}  \int_{d}^{\infty}\frac{1-e^{-x}}{1+e^{x-t}} \, \frac{dx}{\sqrt{x^2 -d^2}}  - \ln  \pi \,.
\end{align}
Moreover, for any sequence $({\delta_\mu} (\lambda), T (\lambda))$  that lies  strictly above the curve
$T_{\delta_\mu}^{\rm g}$, in the sense that 
\begin{equation}
\lim_{\lambda \to 0}\left( \ln
{\frac{\bar\mu}{T}} + \frac{\pi}{2
\sqrt{{\bar\mu}}\varrho(\lambda)} -
\kappa^{\rm g}(\delta_\mu/T)\right) < 2 - \gamma - \ln (8/\pi) 
\end{equation}
the BCS gap equation \eqref{overgapeqn} cannot have a solution for small enough $\lambda$  with the property that
the operator $ K^\Delta_{T,\delta_\mu} + \lambda V$
has $0$ as lowest eigenvalue. In particular, the normal state minimizes $\F_T$ in this case.
\end{theorem}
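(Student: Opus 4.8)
The plan is to reduce the problem, via the Birman--Schwinger principle, to the behavior of the operator $K^\Delta_{T,\delta_\mu}+\lambda V$ at the bottom of its spectrum, and to use the extra hypotheses on $V$ to argue that the minimizing $\Delta$ behaves like a constant on the Fermi surface as $\lambda\to0$. First I would observe that, under the assumptions $\hat V\le 0$, $\hat V(0)<0$, and $V$ radial, any solution $\alpha$ of the gap equation with $K^\Delta_{T,\delta_\mu}+\lambda V$ having $0$ as lowest eigenvalue has $\hat\alpha$ of one sign (Perron--Frobenius), so $\Delta=-2\lambda\hat V\ast\hat\alpha$ is also of one sign and radial. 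The curve $T^{\rm g}_{\delta_\mu}$ is precisely where $\inf_{y>0}\infspec(K^y_{T,\delta_\mu}+\lambda V)=0$; the strict inequality in the hypothesis means we are above this curve, so $\inf_{y>0}\infspec(K^y_{T,\delta_\mu}+\lambda V)>0$ for small $\lambda$. The key point is then that for a genuine solution $\Delta$ of the gap equation, $\Delta$ is, to leading order, spatially constant, so that $K^\Delta_{T,\delta_\mu}$ is well approximated by $K^{y}_{T,\delta_\mu}$ with $y$ the value of $|\Delta|$ on $\Omega_{\bar\mu}$; since $K^{y}_{T,\delta_\mu}+\lambda V$ has strictly positive infimum, the equation $(K^\Delta_{T,\delta_\mu}+\lambda V)\alpha=0$ has no nonzero solution, a contradiction. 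That $\alpha=0$ then minimizes $\F_T$ follows since the only stationary points with $\alpha\neq0$ would solve the gap equation.

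The core computation is the analogue of Lemmas~\ref{calcm} and \ref{calcm2} with $K^0_{T,\delta_\mu}$ replaced by $K^{\Delta}_{T,\delta_\mu}$ for $\Delta$ constant. Writing $\Delta=T\,d\cdot(\text{something})$ — more precisely, parametrizing the constant value of $|\Delta|/T$ by $d>0$ — one computes
\[
 \frac{1}{4\pi\bar\mu}\int_{\R^3}\left(\frac{1}{K^{\Delta}_{T,\delta_\mu}(p)}-\frac{1}{p^2}\right)dp
 = \frac{1}{\sqrt{\bar\mu}}\left(\ln\frac{\bar\mu}{T}+\gamma-2+\ln\frac{8}{\pi}-\zeta(\delta_\mu/T,d)+o(1)\right),
\]
where $\zeta$ is the function in \eqref{ctc5}. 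This is done exactly as in Lemma~\ref{calcm}: split into $p^2\lessgtr\bar\mu$, change variables to $t=\pm(p^2-\bar\mu)$, peel off the singular radial piece near $t=0$ which now involves $E=\sqrt{t^2+|\Delta|^2}$ instead of $|t|$, and integrate by parts using \eqref{euler}; the extra terms $-\int_0^d e^{-x}\ln(x/d)\,dx$ and $\int_d^\infty e^{-x}\ln(1+\sqrt{1-(d/x)^2})\,dx$ arise precisely from replacing $|t|$ by $\sqrt{t^2+d^2T^2}$ in the logarithmic region. The Birman--Schwinger analysis (Neumann expansion of $(1+\lambda V^{1/2}MV^{1/2})^{-1}$, first-order perturbation theory around $\V_{\bar\mu}$, replacement of $M$ by $\W_{\bar\mu}$ in the limit) goes through verbatim, yielding that a solution exists iff
\[
 \lim_{\lambda\to0}\left(\ln\frac{\bar\mu}{T}+\frac{\pi}{2\sqrt{\bar\mu}\varrho(\lambda)}-\inf_{d>0}\zeta(\delta_\mu/T,d)\right) \ge 2-\gamma-\ln(8/\pi),
\]
since one is free to choose the constant $d$ optimally — it is the value of $|\Delta|/T$ on the Fermi surface, determined self-consistently — and $\kappa^{\rm g}(t)=\inf_{d>0}\zeta(t,d)$.

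The main obstacle, and the step requiring genuine work, is justifying that $\Delta$ may be treated as constant in the limit $\lambda\to0$, with value equal to its value on $\Omega_{\bar\mu}$. Here one uses that $\hat\alpha$ is, up to normalization, close to the ground state $u$ of $\V_{\bar\mu}$ (after rescaling), that $\Delta=-2\lambda\hat V\ast\hat\alpha$ is therefore close to a multiple of $-2\lambda\hat V\ast u$, and — crucially — that the dominant contribution to the gap equation comes from momenta near $\Omega_{\bar\mu}$, where $\Delta$ varies only on scale $\lambda$ while $T$ is exponentially small. One must control the error made by replacing $K^\Delta_{T,\delta_\mu}$ by $K^{|\Delta(p_F)|}_{T,\delta_\mu}$ uniformly; the continuity of $\hat V$ (from $V\in L^1$) and the Lipschitz continuity of the spherical average of $|\hat\varphi|^2$ used in \eqref{deno2} are what make this possible. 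I would structure this as a separate lemma mirroring the $a$ priori bounds in \cite{HS}, showing that for a solution on or above $T^{\rm g}_{\delta_\mu}$ the ratio $\max|\Delta|/\min_{\Omega_{\bar\mu}}|\Delta|\to1$, so that the constant-$\Delta$ computation is not merely an upper bound but captures the true leading asymptotics, and hence the strict inequality in the hypothesis genuinely obstructs solvability.
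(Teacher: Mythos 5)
Your proposal is correct and follows essentially the same route as the paper: evaluate the analogue of Lemma~\ref{calcm} with constant $\Delta=dT$ to produce $\zeta(t,d)$ and $\kappa^{\rm g}=\inf_{d>0}\zeta$, use the radial symmetry and $\hat V\le 0$ to identify the constant ground state of $\V_{\bar\mu}$ so that (via Birman--Schwinger and the representation $\Delta=-2\lambda\hat V\ast\hat\alpha$) the solution's $\Delta$ can be replaced by its value on the Fermi surface up to $o(1)$ in the scalar quantity $m(T,\delta_\mu,\Delta)$, and then conclude non-existence above the curve together with the Perron--Frobenius argument for the normal state. The only cosmetic difference is that the paper needs just the one-sided bound $m(T,\delta_\mu,\Delta(\sqrt{\bar\mu}))\le \overline m(T,\delta_\mu)$ rather than the full ``iff''/sharp-asymptotics statement you aim for, but that does not change the substance.
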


We note that $\lim_{d\to 0} \xi(t,0) = \kappa^{\rm i}(t)$. Moreover,
$\kappa^{\rm g}(t) = \kappa^{\rm i}(t)$ for $t\leq \cosh^{-1}(2)\approx
1.32$. Numerically, one can check that this equality holds even on the
larger interval $t\in [0,1.91]$.  (Compare with Fig.~\ref{fig1}.)

The last statement of the theorem concerning the fact that the normal
state minimizes $\F_T$ follows from a Perron-Frobenius argument, as
explained in Subsect.~\ref{ss:det}. It implies that a non-vanishing
and minimizing $\alpha$ is necessarily the ground state of $
K^\Delta_{T,\delta_\mu} + \lambda V$, with eigenvalue zero.

\begin{proof}
  By the previous arguments we obtain for $(\delta_\mu,T) \in
  T^{\rm g}_{\delta_\mu}$ the following behavior in the small coupling
  limit:
  \begin{equation}\label{tceq1}
  \lim_{\lambda\to 0} \left( \om \left( \delta_\mu, T\right) + \frac 1{\inf {\rm spec} \left
        (\lambda \sqrt{ \bar\mu} \, \V_{\bar\mu} - \lambda^2 \W_{\bar\mu}\right)} \right) =
  0\,,
\end{equation}
where $$\om \left(\delta_\mu,T \right) = \frac 1{4\pi \bar\mu}
\max_{y}\int_{\R^3}\left( \frac 1 {K^{y}_{T,\delta_\mu}(p) } -\frac
  1{p^2}\right) dp\,. $$ For the first part of the theorem it remains
to evaluate $\om\left(\delta_\mu,T \right)$, which is done in the
following lemma.

\begin{Lemma}\label{tclemma}
  In the limit $(T,\delta_\mu) \to (0,0)$,
\begin{equation}\label{tcm1}
\om \left(\delta_\mu,T \right) = \frac 1{\sqrt{\bar\mu}}
\left( \ln \frac {\bar\mu}{T} + \gamma - 2 +\ln \frac 8\pi
-\kappa^{\rm g}(\delta_\mu/T) + o(1)\right)\,.
\end{equation}
\end{Lemma}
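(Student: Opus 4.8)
The plan is to compute the limit of $\om(\delta_\mu,T)$ by reducing the maximization over $y$ to a one-parameter problem and then evaluating the resulting radial integrals, following the template of Lemmas~\ref{calcm} and~\ref{calcm2}. The first step is to observe that since $K^y_{T,\delta_\mu}(p) = T f(\sqrt{(p^2-\bar\mu)^2 + y^2}/T, \delta_\mu/T)$ depends on $p$ only through $(p^2-\bar\mu)^2$, the relevant scaling variable is $d := y/\bar\mu$ (or more precisely a rescaled version tied to $\bar\mu/T$); after the substitution $p^2 - \bar\mu \mapsto \pm t$ and then $t \mapsto Tx$ as in the earlier proofs, one finds that the $y$-dependence enters through a dimensionless parameter $d$, and the maximum over $y$ becomes the infimum over $d>0$ of a function $\zeta(t,d)$, with $t = \delta_\mu/T$. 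The key point is that the singular contribution near the Fermi surface, where $|p^2-\bar\mu| \lesssim Ty/\ldots$, must be handled carefully: for $|p^2 - \bar\mu|$ small the gap $y$ regularizes the integrand, producing the $\sqrt{x^2 - d^2}$ in the denominators of \eqref{ctc5} and the logarithmic terms $-\int_0^d e^{-x}\ln(x/d)\,dx$ and $\int_d^\infty e^{-x}\ln(1+\sqrt{1-(d/x)^2})\,dx$.

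Concretely, I would split $\R^3$ into the region $p^2 \le \bar\mu$ and $p^2 \ge \bar\mu$, change variables to $t = |p^2-\bar\mu|$, and then, exactly as in \eqref{7-11} and \eqref{ccdd}, isolate the part of the integrand that is singular as $(T,\delta_\mu)\to 0$; this is the term $\frac{1}{\sqrt{\bar\mu}}\int \Upsilon_y(t)/t\,dt$ where $\Upsilon_y$ is the analogue of $\Upsilon_0$ in \eqref{def:ups} but with $t$ replaced by $\sqrt{t^2+y^2}$ inside the Fermi factors and an extra Jacobian-type weight $t/\sqrt{t^2+y^2}$ from differentiating $\sqrt{t^2+y^2}$. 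The non-singular remainder terms (the analogues of the first three integrals in \eqref{ccdd}) converge, after setting $y\to 0$ in them since $y = Td\cdot(\text{const})\to 0$, to the same constants $2-\gamma - \ln(8/\pi)$ bookkeeping as before, together with the new $d$-dependent boundary contributions coming from the fact that the lower limit of the $t$-integral is now effectively shifted by an amount proportional to $Ty$. Rescaling $t = Tx$ and $y = Td$ (so $d$ is held fixed while extracting the limit, then optimized), the surviving $d$-dependent part of $\om$ is precisely $-\kappa^{\rm g}(\delta_\mu/T)/\sqrt{\bar\mu} = -\inf_{d>0}\zeta(t,d)/\sqrt{\bar\mu}$; one then checks that $\lim_{d\to 0}\zeta(t,d) = \kappa^{\rm i}(t)$, recovering Lemma~\ref{calcm} as the degenerate case, and that the infimum is attained at $d = b(t)$-type critical points so that for $t \le \cosh^{-1}(2)$ one gets $d=0$ and $\kappa^{\rm g}=\kappa^{\rm i}$.

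For the second part of the theorem — that above $T^{\rm g}_{\delta_\mu}$ the gap equation has no solution with $K^\Delta_{T,\delta_\mu}+\lambda V$ having $0$ as its lowest eigenvalue — I would argue as follows. If such a $\Delta$ existed with $\alpha \ne 0$ the ground state, then by the Birman–Schwinger principle $-1$ would be the lowest eigenvalue of $\lambda |V|^{1/2}(K^\Delta_{T,\delta_\mu})^{-1}V^{1/2}$, hence $\infspec(K^\Delta_{T,\delta_\mu} + \lambda V) = 0$, which forces $(\delta_\mu,T)$ to lie on or below the curve defined by $\inf_y \infspec(K^y_{T,\delta_\mu}+\lambda V) = 0$, i.e. on or below $T^{\rm g}_{\delta_\mu}$ — contradicting the hypothesis. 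The crucial input here, which uses the radiality, $\hat V \le 0$ and $\hat V(0)<0$ assumptions, is the claim (asserted in Subsect.~\ref{ss:det}) that for the minimizer the order parameter $\Delta$ can be replaced by a constant equal to its value on the Fermi surface in the small-$\lambda$ limit, so that $\infspec_y$ over constant gaps $y$ genuinely captures $\infspec$ over the true $\Delta$; this is what lets the spectral curve $T^{\rm g}_{\delta_\mu}$ control existence. Finally, the Perron–Frobenius argument (negativity of $\hat\alpha$ when $\hat V \le 0$) guarantees that if $\alpha \ne 0$ then it is indeed the ground state, so the only remaining possibility above $T^{\rm g}_{\delta_\mu}$ is $\alpha \equiv 0$, the normal state.

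The main obstacle I anticipate is the careful interchange of limits in the singular Fermi-surface region when $y \ne 0$: one must justify that $\sqrt{(p^2-\bar\mu)^2 + y^2}$-regularization allows passing the $T\to 0$ limit inside the radial integral uniformly in the relevant range of $d$, and that the maximum over $y$ commutes with the $\lambda \to 0$ (equivalently $T\to 0$) limit — i.e. that the optimal $y$ scales like $T$ and stays in a compact range of $d$. The Lipschitz continuity of the spherical average of $|\hat\varphi|^2$ (cited from \cite{HS}) should again be the technical tool that makes the interchange legitimate, and the monotonicity/convexity structure of $f(x,c)$ from the discussion preceding \eqref{Kd} should pin down that the optimal gap is of order $T$, so that $d$ ranges over a fixed interval and $\inf_{d>0}\zeta(t,d)$ is the correct limiting object.
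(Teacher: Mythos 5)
Your plan reproduces the paper's own proof of Lemma~\ref{tclemma} essentially step for step: split the integral at the Fermi surface, isolate the singular piece $\int_0^{\bar\mu}\Upsilon_y(t)/\sqrt{t^2+y^2}\,dt$, rescale with $c=\delta_\mu/T$ and $d=y/T$ (noting the optimal $y$ is of order $T$), integrate by parts to produce the logarithmic terms in $\zeta(t,d)$, and turn the maximum over $y$ into $\inf_{d>0}\zeta(t,d)=\kappa^{\rm g}$, so the approach is correct and the same as the paper's. Two minor remarks: the weight $1/\sqrt{t^2+y^2}$ comes directly from the factor $1/E$ in $1/K^{y}_{T,\delta_\mu}$ rather than from a Jacobian, and the Lipschitz continuity of the spherical average of $|\hat\varphi|^2$ is not needed for this scalar computation (it enters only in the operator limit \eqref{tceq1}); dominated convergence handles the regular terms, exactly as in the paper.
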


\begin{proof}
As in the proof of Lemma~\ref{calcm}, we  can rewrite
 $\om \left(\delta_\mu,T \right)$ as
  \begin{align}\label{tcm7}
 \om \left(\delta_\mu,T \right) =\max_{y}\, I (\delta_\mu, T, y)
 \end{align}
 where
 \begin{align}\label{tcm77}
 I (\delta_\mu, T, y)=& \frac{1}{2\bar\mu}  \int_{\bar\mu}^{\infty} \left( \Upsilon_{y} \left(t \right)  \sqrt{\frac{\bar\mu +t}{t^2 + y^2}}  - \frac{1}{\sqrt{\bar\mu +t}} \right)dt
  \\ \nonumber &  +  \frac{1}{2\bar\mu}  \int^{\bar\mu}_{0} \frac{\Upsilon_{y} \left(t \right) }{\sqrt{t^2 + y^2}} \left(\sqrt{\bar\mu + t} +\sqrt{\bar\mu - t} - 2\sqrt{\bar\mu} \right)dt
  \\ \nonumber & - \frac{1}{2\bar\mu}\int_{0}^{\bar\mu}\left(\frac{1}{\sqrt{\bar\mu + t}} + \frac{1}{\sqrt{\bar\mu - t}}\right)dt  + \frac{1}{\sqrt{\bar\mu}}  \int^{\bar\mu}_{0} \frac{\Upsilon_{y} \left(t\right) }{\sqrt{t^2 + y^2}} dt 
  \end{align}
and 
$$ \Upsilon_{y} \left(t \right)  =
\left(1-\frac{1}{1+e^{\frac{\sqrt{t^2 +y^2} +\delta_\mu}{T}} }-
\frac{1}{1+e^{\frac{\sqrt{t^2 +y^2} -\delta_\mu}{T}} }\right)\,.$$

It is clear that the maximum is attained for a $y$ that goes to zero
as $(T,\delta_\mu)\to (0,0)$.  Using the Lebesgue dominated
convergence theorem, one observes that in the limit $T, y, \delta_\mu
\to 0$ the first integral converges to $\bar\mu^{-1/2}
\ln\left(1+\sqrt 2\right)$, the second becomes
$\bar\mu^{-1/2}\left( -\ln(\sqrt{2} - 1) + \sqrt{2} +\ln 4 -2
\right)$, and the third $\sqrt{2/\bar\mu}$.
  
It remains to compute the last integral in (\ref{tcm77}).  Changing
variables to $x=\sqrt{t^2+y^2}/T$, we can rewrite it as
  \begin{align}\label{tcm8}
  &\int_{d}^{\sqrt{d^2 + (\bar\mu/T)^2}} \frac {1-\frac{1}{1+e^{x+c}} - \frac{1}{1+e^{x-c}}}{\sqrt{x^2 - d^2}}\, dx = I + II := \\ \nonumber
  &= \int_{d}^{\sqrt{d^2+(\bar\mu/T)^2}}  \frac{ \frac{1}{1+e^{c}}- \frac{1}{1+e^{x+c}}}{\sqrt{x^2 -d^2}}\, dx + \int_{d}^{\sqrt{d^2+ (\bar\mu/T)^2}} \frac{ \frac{1}{1+e^{-c}}- \frac{1}{1+e^{x-c}}}{\sqrt{x^2 -d^2}}\, dx\,,
 \end{align}
  where $c=\frac{\delta_\mu}{T}$ and $d=\frac{y}{T}$.
  Now $I$ in (\ref{tcm8}) equals 
  \begin{equation}
  I =  \frac{1}{1+e^{c}} \int_{d}^{\sqrt{d^2 + \left({\bar\mu}/{T}\right)^2}}
  \left[ \frac{1-e^{-x}}{\sqrt{x^2 -d^2}} -
 \frac{1}{\sqrt{x^2 - d^2}}\left(\frac{1-e^{-x}}{1+e^{x+c}}\right)\right]dx
 \label{tcm9} \, ,
  \end{equation}
  and similarly for $II$, replacing $c$ by $-c$. For the second term in the integrand, we can simply replace the upper integration boundary by $\infty$ as $T\to 0$. 
To evaluate the  integral of the first term, we integrate by parts and obtain
  \begin{align}
  & \left(1-e^{-\sqrt{d^2+\left({\bar\mu}/{T}\right)^2 }} \right)  \ln \left(\frac{\bar\mu}{T}+\sqrt{d^2 + \left(  {\bar\mu}/ T \right)^2 } \right) - \left(1-e^{-d}\right)  \ln d\\ \nonumber
  & -  \int_{d}^{\sqrt{d^2+\left({\bar\mu}/{T}\right)^2 }}  e^{-x}\ln \left( x +\sqrt{x^2 - d^2} \right) dx\, .
   \end{align}
The latter integral converges as $T\to 0$. Proceeding in the same way with $II$, we have thus shown that
\begin{align}
 & \int^{\bar\mu}_{0} \frac{\Upsilon_{y} \left(t\right) }{\sqrt{t^2 + y^2}} dt \\ & \nonumber = \ln{\left(\frac{\bar\mu}{T}\right)} +\ln2 - \left(1-e^{-d}\right) \ln d -  \int_{d}^{\infty}  e^{-x}\ln \left( x +\sqrt{x^2 - d^2} \right) dx + o(1) \\ \nonumber  & \quad - \frac{1}{1+e^{c}} \int_{d}^{\infty}
 \frac{1-e^{-x}}{1+e^{x+c}}\, \frac{dx}{\sqrt{x^2 - d^2}} - \frac{1}{1+e^{-c}} \int_{d}^{\infty}
 \frac{1-e^{-x}}{1+e^{x-c}}\, \frac{dx}{\sqrt{x^2 - d^2}}\,.
\end{align}
Combining all the terms, keeping in mind (\ref{euler}),
we arrive at (\ref{tcm1}). 
  \end{proof}

  We now turn to the proof of the second part of
  Theorem~\ref{tctheorem}. The order parameter $\Delta(p)$ is assumed
  to satisfy the BCS gap equation. Under the assumption that $\alpha$
  is the ground state of $K^\Delta_{T,\delta_\mu} + \lambda V$ we can
  use the Birman-Schwinger principle to show, as in \cite[Lemma
  4]{HS}, that
\begin{equation}\label{deltaprop}
\Delta(p) = - g(\lambda) \left(\int_{\Omega_{\bar\mu}} \hat V(p-q )
d\omega(q) + \lambda \nu_\lambda(p)\right),
\end{equation}
with $\|\nu_\lambda\|_{\infty} \leq C$ uniformly in $\lambda$, and
$g(\lambda)$  a normalization constant determined by the
gap equation.
Eq.~(\ref{deltaprop}) can be derived using standard perturbation theory applied
to the operator 
\begin{equation}\label{off}
\lambda m(T,\delta_\mu,\Delta) \F |V|^{1/2} \frac{1}{1+\lambda
V^{1/2} M(T,\delta_\mu,\Delta) |V|^{1/2}} V^{1/2} \F^*\,
\end{equation}
where
\begin{equation}\nonumber
m(T,\delta_\mu, \Delta) =
 \frac 1{4\pi \bar\mu}
\int_{\R^3} \left( \frac {1}{K^{\Delta}_{ T, \delta_\mu}(p)}
  - \frac 1{p^2}\right) dp\, ,
\end{equation}
and $M$ is defined as in \eqref{bigM}, with $K^0_{T,\delta_\mu}$
replaced by $K^\Delta_{T,\delta_\mu}$ and $m(T,\delta_\mu)$ by
$m(T,\delta_\mu,\Delta)$. The eigenvector corresponding to the lowest
eigenvalue $-1$ of this operator is, to leading order, given by the
lowest eigenvector of $\V_{\bar \mu}$, which is the constant function
$u(p)=\frac 1 {\sqrt{4\pi\bar \mu}}$. In fact, the radial symmetry of
$V$ implies that the eigenstates of $\V_{\bar \mu}$ are the spherical
harmonics, and the fact that $\hat V \leq 0$ forces the ground state
of $\V_{\bar \mu}$ to have a fixed sign, hence to be constant. The
lowest eigenvector of (\ref{off}) is, therefore, $\phi = u + \lambda
\xi_\lambda$, with $\xi_\lambda$ uniformly bounded in
$L^2(\Omega_{\bar\mu})$ norm. With the aid of the Birman-Schwinger
principle, $\alpha$ can be recovered from $\phi$ via $\alpha =
c V^{1/2} \F^*\phi$ for some normalization constant
$c$. Hence we have
$$
\Delta(p)= 2
K^\Delta_{T,\delta_\mu}(p) \hat \alpha(p) = - 2 \lambda \widehat {V
  \alpha} (p) = - 2 \lambda c \widehat{|V|^{1/2} \phi} (p)\,,
$$ which implies \eqref{deltaprop}.

Using the Lipschitz continuity of the main term in (\ref{deltaprop}),
we can argue as in \cite[Thm.~2]{HS} that for small $\lambda$ the
function $m \left(\delta_\mu, {T}, \Delta \right)$ is determined by
the value of $\Delta(p)$ at the Fermi surface $p^2 = \bar \mu$, i.e.,
$$
m \left(\delta_\mu, {T}, \Delta
\right) = m \left(\delta_\mu, {T}, \Delta(\sqrt{\bar \mu}) \right) +
o(1)\,.$$ 
This implies further that
\begin{equation}\label{p30}
  \lim_{\lambda\to 0} \left( m \left( \delta_\mu, T, \Delta(\sqrt{\bar\mu}) \right) + \frac 1{\inf {\rm spec} \left
        (\lambda \sqrt{ \bar\mu} \, \V_{\bar\mu} - \lambda^2 \W_{\bar\mu}\right)} \right) =
  0\,.
\end{equation}
Since, by definition, 
$$  
m \left(\delta_\mu, {T}, \Delta(\sqrt{\bar\mu}) 
\right) \leq \overline m(\delta_\mu, {T})\,, $$
it follows from (\ref{tceq1}) that (\ref{p30}) cannot be satisfied for $(\delta_\mu,T)$ outside of
$T^{\rm g}_{\delta_\mu}$, i.e., for such $(\delta_\mu,T)$ such that 
\begin{equation}
  \lim_{\lambda\to 0} \left( \overline m \left( \delta_\mu, T \right) + \frac 1{\inf {\rm spec} \left
        (\lambda \sqrt{ \bar\mu} \, \V_{\bar\mu} - \lambda^2 \W_{\bar\mu}\right)} \right) <  0\,.
\end{equation}
This completes the proof of the theorem.
\end{proof}

%%%%%%%%%%%%%%%%%%%%%%%%%%%%%%%%%%%%%%%%%%%%%%%%%%%%%%%%%%%%%%%%%%%%%%%%%%%%%%%%%%%%

\end{document}